\definecolor{darkred}{rgb}{0.45,0,0}
\definecolor{lightblue}{rgb}{0.7,0.7,1}
\definecolor{darkorange}{rgb}{0.8, 0.4, 0.1}
\definecolor{darkgreen}{rgb}{0.05, 0.7, 0.06}
\definecolor{darkblue}{rgb}{0.2, 0.2, 0.9}
\definecolor{darkpink}{rgb}{1.0, 0.08, 0.58}
\definecolor{pink}{rgb}{1.0, 0.5, 0.8}
\theoremstyle{plain}
\newtheorem{thm}{Theorem}
\newtheorem*{thm*}{Theorem}
\newtheorem*{prop*}{Proposition}
\theoremstyle{definition}
\theoremstyle{remark}
\newcommand{\maps}{\colon}
\newcommand{\hf}{{\textstyle \frac{1}{2}}}
\newcommand{\fth}{{\textstyle \frac{1}{4}}}
\renewcommand{\d}{\slashed{\partial}} % Dirac
\newcommand{\V}{V} % spin-j representation of SU(2)
\renewcommand{\H}{\mathsf{H}}  % Hilbert space of spin-1/2 particle
\newcommand{\h}{\mathcal{H}} % arbitrary Hilbert space
\newcommand{\Fock}{\mathbf{\Lambda}}  % Fock space
\newcommand{\ad}{\mathrm{ad}}
\newcommand{\q}{\mathbf{q}}
\newcommand{\p}{\mathbf{p}}
\renewcommand{\L}{\mathbf{L}}
\newcommand{\M}{\mathbf{M}}
\newcommand{\A}{\mathbf{A}}
\newcommand{\B}{\mathbf{B}}
\newcommand{\e}{\mathbf{e}}
\newcommand{\Z}{\mathbb{Z}}
\newcommand{\R}{\mathbb{R}}
\newcommand{\C}{\mathbb{C}}
\newcommand{\HH}{\mathbb{H}}
\newcommand{\SU}{\operatorname{SU}}
\newcommand{\U}{\operatorname{U}}
\newcommand{\SO}{\operatorname{SO}}
\newcommand{\su}{\mathfrak{su}}
\newcommand{\so}{\mathfrak{so}}
\newcommand{\g}{\mathfrak{g}}
\renewcommand{\sl}{\mathfrak{sl}}
\title{Second Quantization for the Kepler Problem}
\author[Baez]{John C.\ Baez} 
\address{School of Mathematics, University of Edinburgh, James Clerk Maxwell Building, Peter Guthrie Tait Road, Edinburgh, UK EH9 3FD}
\address{Department of Mathematics, University of California, Riverside CA, USA 92521}
\email{baez@math.ucr.edu}
\begin{document}

\begin{abstract}
The Kepler problem concerns a point particle in an attractive inverse square force.  After a brief review of the classical and quantum versions of this problem, focused on their hidden $\SU(2) \times \SU(2)$ symmetry, we discuss the quantum Kepler problem for a spin-$\hf$ particle.  We show that the Hilbert space $\mathcal{H}$ of bound states for this problem is unitarily equivalent, as a representation of $\SU(2) \times \SU(2)$, to the Hilbert space of solutions of the Weyl equation on the spacetime $\R \times S^3$.    This equation describes a massless chiral spin-$\hf$ particle.  We then form the fermionic Fock space on $\mathcal{H}$ and show this is unitarily equivalent to the Hilbert space of a massless chiral spin-$\hf$ free quantum field on $\R \times S^3$, again as representations of $\SU(2) \times \SU(2)$.  By modifying the Hamiltonian of this free field theory, we obtain the well-known `Madelung rules'.  These give a reasonable approximation to the observed filling of subshells as we consider elements with more and more electrons, and match the rough overall structure of the periodic table.
\end{abstract}

\maketitle

%\setcounter{tocdepth}{1} % comment this out if you want to see the subsections in the table of contents
%\tableofcontents

\section{Introduction}
\label{sec:introduction}

On January 6, 1680 Robert Hooke wrote to Isaac Newton, suggesting that he look into an
inverse square force law for gravity:
\begin{quote}
But my supposition is that the Attraction always is in a duplicate
proportion to the Distance from the Center Reciprocall [...]
\end{quote}
Ever since, the mathematics of the so-called Kepler problem---the motion of a particle in a central force proportional to the inverse square of distance---has revealed more and more interesting features.  Quantizing the Kepler problem was instrumental in testing quantum mechanics against the hydrogen atom, and it turns out this problem has a hidden symmetry under rotations in 4-dimensional space.   In fact, wavefunctions for bound states of the hydrogen can be described as functions on the 3-sphere, $S^3$.

Here we consider \emph{second quantization} for the Kepler problem.    In second quantization, a state in Fock space describes a collection of particles.  Thus, we can use it to describe multi-electron atoms.  This sets up a surprising relation between chemistry and quantum field theory: specifically, the massless spin-$\hf$ field on the spacetime $\R \times S^3$.   While the interactions between electrons are not accounted for in this approach, we can at least give a Hamiltonian for the massless spin-$\hf$ field that yields some well-known rules describing the structure of the periodic table of elements.

We start with a brief review.    The history of the inverse square law goes back at least to Kepler's discovery that the planets move in elliptical orbits, and in some sense even earlier to Apollonius of Perga's \textsl{Conics}.  But let us start with Newton.  We make no pretensions to completeness in this summary.

Newton discovered that the three kinds of orbits for a particle in an inverse square force---bound, unbound, and right on the brink between bound and unbound---are the three types of conic section: ellipse, hyperbola and parabola.   To modern eyes, the fact that the major axis of an elliptical orbit remains unchanged with the passage of time hints at extra conserved quantities besides the obvious ones.  Indeed, if we work in units where the inverse square force law says
\begin{equation}
    \ddot \q = - \frac{\q}{q^3}, 
\end{equation}
and if we define momentum by $\p = \dot \q$, then not only are energy
\begin{equation}
      E = \hf p^2 - \frac{1}{q}   
\end{equation}
and angular momentum
\begin{equation}
     \L = \q \times \p  
\end{equation}
conserved, but also the vector
\begin{equation}
 \e = \p \times \L - \frac{\q}{q}.  
 \end{equation}
This vector always points in the direction of the orbit's perihelion, and its magnitude $e$ equals the eccentricity of the orbit.  

This extra conserved quantity was named the `Runge--Lenz vector' after Lenz \cite{Lenz} used it in 1924 to study the hydrogen atom, citing Runge's \cite{Runge} popular textbook from five years earlier.  But Runge never claimed any originality: he attributed this vector to Gibbs.  Now many people call it the `Laplace--Runge--Lenz vector', honoring Laplace's \cite{Laplace} discussion of it in 1799.    But in fact this vector goes back at least to Jakob Hermann \cite{Hermann}, who wrote about it in 1710, triggering further work by Johann Bernoulli \cite{Bernoulli} in the same year.    Perhaps it would be wise to call $\e$ simply the `eccentricity vector'.

The power of this vector becomes apparent when we take its dot product with the particle's position $\q$.  A little manipulation gives
\begin{equation}
\label{eq:identity_1}
 \e \cdot \q = L^2 - q . 
\end{equation}
Combined with the fact that $\q$ moves in the plane perpendicular to $\L$, this equation describes a conic of eccentricity $e$.  We can also take the dot product of $\e$ with itself and show, using only vector identities and the formulas above, that
\begin{equation}
\label{eq:identity_2}
   e^2 = 1 + 2L^2 E. 
\end{equation}
This gives three cases:
\begin{itemize}
\item $E < 0$: in this case $e < 1$ and the orbit is an ellipse (perhaps a circle).
\item $E = 0$: in this case $e = 1$ and the orbit is a parabola.
\item $E > 0$: in this case $e > 1$ and the orbit is a hyperbola.
\end{itemize}
This paper focuses solely on the first case.

In 1847 Hamilton \cite{Hamilton} discovered another remarkable feature of the inverse square force law, whose full significance became clear only later: the momentum $\p$ moves in a circle.   This can seen using the conservation of $\L$ and $\e$.  Taking the inner product of 
\begin{equation}
 \frac{\q}{q} = \p \times \L - \e  
 \end{equation}
with itself, which is $1$, and doing some manipulations using the fact that $\L$ is perpendicular to both $\p$ and $\e$, we can show
\begin{equation}
\label{eq:Hamilton}
  \left( \p - \frac{\L \times \e}{L^2} \right)^2 = \frac{1}{L^2}.
\end{equation}
Thus, $\p$ stays on a circle centered at the point $(\L \times \e)/L^2$.

We now know that in classical mechanics, conserved quantities come from symmetries.   In the Kepler problem, conservation of energy comes from time translation symmetry, while conservation of the angular momentum comes from rotation symmetry.  Which extra symmetries give conservation of the eccentricity vector?  

A systematic approach to this question uses Poisson brackets.   If we use the sign convention where $\{q_j,p_k\} = \delta_{jk}$, some calculations give
\begin{equation}
\begin{array}{ccr}
      \{L_j, L_k\} &=& \epsilon_{jk\ell} L_\ell \\ [2pt]  \relax
      \{e_j, L_k\}  &=& \epsilon_{jk\ell} e_\ell \\ [2pt]  \relax
      \{e_j, e_k \}  &=& -2E \epsilon_{jk\ell} L_\ell 
\end{array}
\end{equation}
and of course
\begin{equation}
    \{E, L_j\} = \{E,e_j\} = 0 
\end{equation}
since $\L$ and $\e$ are conserved.   The factor of $-2E$ above is annoying, but on the region of phase space where $E < 0$---that is, the space of bound states, where the particle carries out an elliptical orbit---we can define a vector
\begin{equation}
  \M = \frac{\e}{\sqrt{-2E}} 
\end{equation}
and obtain
\begin{equation}
\begin{array}{ccc}    
\{L_j, L_k\} &=& \epsilon_{jk\ell} L_\ell  \\ [2pt]  \relax
\{L_j, M_k\} &=& \epsilon_{jk\ell} M_\ell  \\ [2pt]  \relax
\{M_j, M_k \} &=& \epsilon_{jk\ell} M_\ell .
\end{array}
\end{equation}
This gives a Lie algebra isomorphic to $\so(3) \oplus \so(3)$, as becomes clear if we
set
\begin{equation}
\label{eq:AB_from_LM}
\A = \hf(\L + \M), \qquad \B = \hf(\L - \M)  
\end{equation}
and check that
\begin{equation}
\label{eq:poisson_brackets}
\begin{array}{ccc}  
\{ A_j, A_k\} &=&  \epsilon_{jk\ell} A_\ell \\ [2pt]  \relax
\{ B_j, B_k\} &=&  \epsilon_{jk\ell} B_\ell  \\ [2pt] \relax
\{ A_j, B_k\} &=& 0 .
\end{array}
\end{equation}
But $\so(3) \oplus \so(3) \cong \so(4)$, so conservation of the eccentricity vector 
must come from a hidden $\so(4)$ symmetry.   And indeed, the group $\SO(4)$ acts on the bound states of the Kepler problem in a way that commutes with time evolution!

It seems that the first geometrical explanation of this symmetry was found in the quantum-mechanical context.  In 1926, even before Schr\"odinger came up with his famous equation, Pauli \cite{Pauli} used conservation of angular momentum and the eccentricity to determine the spectrum of hydrogen.   In 1935, Fock \cite{Fock} explained this symmetry by setting up an equivalence between hydrogen atom bound states and functions on the 3-sphere \cite{Fock}.  In the following year, Bargmann \cite{Bargmann} connected Pauli and Fock's work using group representation theory.  But it seems the first global discussion of this symmetry in the classical context was given by Bacry, Ruegg, and Souriau \cite{BacryRueggSouriau} in 1966, leading to important work by Souriau \cite{Souriau} and Moser \cite{Moser} in the early 1970s.    Since then, much more has been done.

The key to understanding the $\SO(4)$ symmetry for bound states of the Kepler problem turns out to be Hamilton's result about momentum moving in circles.  Hamilton's circles, defined by Equation \eqref{eq:Hamilton}, are not arbitrary circles in $\R^3$.  Using the inverse of stereographic projection, we can map $\R^3$ to the unit 3-sphere:
\begin{equation}
 \begin{array}{rccl}
  f \maps &\R^3 &\to & S^3    \subset \R^4  \\  \\
              & \p    &\mapsto &  
              \displaystyle{\left(\frac{p^2 - 1}{p^2 +1}, \frac{2 \p}{p^2 + 1}\right).}
\end{array}
\end{equation}
This map sends Hamilton's circles in $\R^3$ to great circles in $S^3$.  Furthermore, this construction gives all the great circles in $S^3$ except those that go through the points $(\pm 1, 0,0,0)$.  These missing great circles correspond to periodic orbits where a particle starts with momentum zero, falls straight to the origin, and bounces back the way it came.   We can embed the phase space of bound states of the Kepler problem, as a symplectic manifold, into $T^\ast S^3$ in such a way that these additional orbits become legitimate trajectories in this larger phase space.   Neglecting these additional orbits caused trouble for Bohr and Sommerfeld in their early work on quantizing the hydrogen atom \cite{Bucher}.   For a clear modern treatment of Hamilton's circles, see Milnor \cite{Milnor} and Egan \cite{Egan}.

Note that points of $S^3$ correspond not to positions but to \emph{momenta} in the Kepler problem.    As time passes, these points move along great circles in $S^3$.  How is their dynamics related to geodesic motion on the 3-sphere?  We can understand this as follows.   From Equation \eqref{eq:identity_2} and the definition of $\M$ it follows that
\begin{equation}
  L^2 + M^2 =  - \frac{1}{2E}, 
\end{equation}
and using the fact that $\L \cdot \M = 0$, an easy calculation gives
\begin{equation}
\label{eq:hamiltonian_classical}
  E \; = \; -\frac{1}{8A^2} \; = \; -\frac{1}{8B^2}.
\end{equation}
In the 3-sphere picture, the observables $A_j$ become functions on $T^\ast S^3$.  These functions are just the components of momentum for a particle on $S^3$, defined using a standard basis of right-invariant vector fields on $S^3 \cong \SU(2)$.   Similarly, the observables $B_j$ are the components of momentum using a standard basis of left-invariant vector fields.     It follows that
\begin{equation}
\label{eq:hamiltonian_reciprocal}
K = 8A^2 = 8B^2 
\end{equation}
is the Hamiltonian for a nonrelativistic free particle on $S^3$ with an appropriately chosen mass.    Such a particle moves around a great circle on $S^3$ at constant speed.   Since the Kepler Hamiltonian $E$ is the negative reciprocal of $K$, particles governed by \emph{this} Hamiltonian move along the same trajectories---but typically not at constant speed. 

Both $K$ and the Kepler Hamiltonian $E = -1/K$ are well-defined smooth functions on the symplectic manifold that Souriau \cite{Souriau} dubbed the `Kepler manifold':
\begin{equation}
   T^+ S^3 = \{ (x,p) : \; x \in S^3, p \in T_x S^3, p \ne 0 \}  .
\end{equation}
We can also think of $T^+ S^3$ as a space of light rays in the `Einstein universe': the manifold $\R \times S^3$ with Lorentzian metric $dt^2 - ds^2$, where $ds^2$ is the usual metric on the unit sphere.   Here a `light ray' is a null geodesic equipped with a choice of covariantly constant tangent vector field, its `4-velocity'.   The corresponding cotangent vector field describes the energy-momentum of the light ray.  This extra information reflects the fact that massless particles can have different energy and momentum even if they trace out the same path in spacetime.     To describe a light ray using a point in $T^+ S^3$, we let $x \in S^3$ be the light ray's position at time zero, while the null cotangent vector $p + \|p\| dt$ describes the light ray's energy-momentum at time zero.  In this way $T^+ S^3$ serves as a phase space for a classical free massless spin-$0$ particle in the Einstein universe.  The Hamiltonian for such a particle is $\sqrt{K}$.   

In what follows we start with the quantum Kepler problem described in terms of the 3-sphere, and then carry out \emph{second quantization} to treat multi-electron atoms, also bringing the electron's \emph{spin} into the picture.   These two novel features are linked: only by treating electrons as identical spin-$\hf$ particles can we obtain the usual picture where, thanks to the Pauli exclusion principle, at most two electrons occupy each orbital. 

We begin in Section \ref{sec:spinless} by reviewing Fock's $S^3$ description of bound states of the hydrogen atom.  This treatment ignores the electron's spin, which we introduce in Section \ref{sec:spin}.    In Section \ref{sec:geometry} we express the Hamiltonian for bound states of the hydrogen atom with spin-$\hf$ electron in terms of  the Dirac operator on $S^3$.  We also give a quaternionic description of these bound states.   In Section \ref{sec:einstein} we reinterpret these bound states as states of a massless chiral spin-$\hf$ particle in the Einstein universe.  In Section \ref{sec:second} we apply second quantization, showing how states of multi-electron atoms correspond to states of a massless chiral spin-$\hf$  free quantum field on the Einstein universe.   Finally, in Section \ref{sec:aufbau} we describe a modified Hamiltonian for this quantum field that makes the lowest-energy $N$-particle state approximately follow the rules summarized by the periodic table---the so-called `Madelung rules'.   

\subsection{Note on units}
\label{subsec:units}

Throughout this paper when discussing the hydrogen atom we work in units with
\[   \hbar = e = 4 \pi \epsilon_0 = \mu = 1 \]
where
\begin{itemize}
\item $\hbar$ is Planck's constant,
\item $-e$ is the charge of the electron,
\item $\epsilon_0$ is the permittivity of the vacuum,
\item $\mu = m_e/(m_e + m_p)$ is the reduced mass of the electron.
\end{itemize}
This allows us to hide all these fundamental constants and write the classical Hamiltonian for the hydrogen atom as in Equation \eqref{eq:hamiltonian_classical}.  For a nucleus with atomic number $N$ we instead set $N e = 1$.

\subsection{Acknowledgements}

I thank my Ph.D.\ advisor, Irving Segal, for teaching me the ways of mathematical physics and introducing me to the conformal geometry of the Einstein universe.  I also thank Greg Egan and Paul Schwahn for help with this project, J.\ Gregory Moxness for Figure \ref{figure_1}, and the referee for useful comments.

\section{The hydrogen atom --- ignoring spin}
\label{sec:spinless}

As already mentioned, there are various ways to quantize the Kepler problem and obtain a description of the hydrogen atom's bound states as wavefunctions on the 3-sphere.   Here we take a less systematic but quicker approach.  We start with the space of wavefunctions on the 3-sphere, $L^2(S^3)$, and describe operators on this space arising from rotational symmetries.   Combining these with standard facts about the hydrogen atom---which we take as known, rather than derive---we set up a unitary equivalence between $L^2(S^3)$ and the space of hydrogen atom bound states, and show the hydrogen atom Hamiltonian has $\SO(4)$ symmetry.  Everything in this section is essentially a review of known material.   Here we ignore the electron's spin, which we introduce in the next section. 

We identify the 3-sphere with the Lie group $\SU(2)$.  The group $\SU(2)$ acts on itself in three important ways, which combine to give the mathematics we need:
\begin{itemize}
\item left multiplication by $g$: $g$ maps $h$ to $gh$,
\item right multiplication by $g^{-1}$: $g$ maps $h$ to $hg^{-1}$,
\item conjugation by $g$: $g$ maps $h$ to $ghg^{-1}$
\end{itemize}
Left and right multiplication commute, so they combine to give a left action of $\SU(2) \times \SU(2)$ on $\SU(2)$.     We thus obtain a unitary representation $R$ of $\SU(2) \times \SU(2)$ on $L^2(\SU(2))$, given by
\begin{equation}
    (R(g_1, g_2) \psi)(g) = \psi(g_1^{-1} g g_2). 
\end{equation}
In what follows we denote $\SU(2)$ as $S^3$ when we regard it as the unit sphere in $\R^4$, 
acted on by $\SU(2) \times \SU(2)$ as above.  The Peter--Weyl theorem lets us decompose $L^2(S^3)$ into finite-dimensional irreducible unitary representations of $\SU(2) \times \SU(2)$:
\begin{equation}
\label{eq:decomposition_1}
    L^2(S^3) \cong \bigoplus_{j} \V_j \otimes \V_j .
\end{equation}
Here $j = 0, \frac{1}{2}, 1, \frac{3}{2}, \dots$ and $\V_j$ is the spin-$j$ representation of $\SU(2)$, which is the irreducible unitary representation of dimension $2j + 1$.  

The representation of $\SU(2) \times \SU(2)$ on $L^2(S^3)$ is generated by self-adjoint operators that correspond to familiar observables for bound states of the hydrogen atom.  
The complexification of $\su(2) \oplus \su(2)$ has self-adjoint elements
\begin{equation}
A_j = (\hf \sigma_j, 0), \qquad 
B_j =  (0, \hf \sigma_j) 
\end{equation}
where $\sigma_j$ are the Pauli matrices.   We also use $A_j$ and $B_j$ to denote the
corresponding self-adjoint operators on $L^2(S^3)$.  They obey the following commutation relations, which are the quantum-mechanical analogues of the Poisson brackets in Equation \eqref{eq:poisson_brackets}:
\begin{equation}
\label{eq:AB_commutation}
\begin{array}{ccc}  
[A_j, A_k] &=&  i\epsilon_{jk\ell} A_\ell \\ [2pt]  \relax
[B_j, B_k] &=&  i\epsilon_{jk\ell} B_\ell  \\ [2pt] \relax
[A_j, B_k] &=& 0 .
\end{array}
\end{equation}
Geometrically speaking, the skew-adjoint operators $-iA_j$ act on $L^2(S^3)$ as differentiation by vector fields on $S^3 \cong \SU(2)$ that generate left translations, while the operators $-iB_j$ act by differentiation by vector fields that generate right translations.    We also use $-iA_j$ and $-iB_j$ to stand for these vector fields.   Since left-invariant-vector fields generate right translations and vice versa, the vector fields $-iA_j$ are right-invariant, while the $-iB_j$ are left-invariant.

As well known, we have
\begin{equation}
\label{eq:casimir}
  \phi \in \V_j \otimes \V_j \implies  A^2 \phi = B^2 \phi = j(j+1) \phi .
\end{equation}
This implies that $A^2 = B^2$ on all of $L^2(S^3)$.   We can define an operator on $L^2(S^3)$ that corresponds to the Hamiltonian for bound states of the hydrogen atom, namely 
\begin{equation}
\label{eq:hamiltonian_without_spin}
H_0 \;=\; - \frac{1}{8(A^2 + \fth)} \;=\;- \frac{1}{8(B^2 + \fth)} .
\end{equation}
This is the quantum analogue of the classical Hamiltonian in Equation \eqref{eq:hamiltonian_classical}.    We discuss the curious appearance of the number $\fth$ here in Subsection \ref{subsec:duflo}, but we can easily see why it is needed.   On the subspace $\V_j \otimes \V_j$, the operator $4A^2 + 1$ acts as multiplication by
\[    4j(j+1) + 1 = 4j^2 + 4j + 1 = (2j + 1)^2. \]
In atomic physics it is traditional to work with the dimension of $\V_j$, 
\begin{equation}
 n = 2j + 1,
\end{equation}
rather than $j$ itself.   Thus, we have 
\begin{equation}
  \phi \in \V_j \otimes \V_j \implies   H_0\phi = - \frac{1}{2n^2} \phi ,
\end{equation}
and as $j$ ranges over all allowed values $j = 0, \frac{1}{2}, 1, \dots$, $n$ ranges over all positive integers.  These are precisely the usual energy eigenvalues for the hydrogen atom, expressed in the units chosen in Subsection \ref{subsec:units}.

We have not yet brought in the action of $\SU(2)$ on $S^3$ by conjugation.   This gives yet another unitary representation of $\SU(2)$ on $L^2(S^3)$.   To conjugate by $g$ we both left multiply by $g$ and right multiply by $g^{-1}$.  Thus, the conjugation representation of $\SU(2)$ on $L^2(S^3)$ has self-adjoint generators
\begin{equation}
     L_j = A_j + B_j  
\end{equation}
and these obey
\begin{equation}
      [L_j, L_k] = i\epsilon_{jk\ell} L_\ell  .
\end{equation}
These operators $L_j$ are the components of the angular momentum of the hydrogen, as we can see from the classical picture in Equation \eqref{eq:AB_from_LM}.   With respect to the conjugation representation, the subspace $\V_j \otimes \V_j \subset L^2(S^3)$ decomposes according to the Clebsch--Gordan rules:
\begin{equation}
\label{eq:decomposition_2}
  \V_j \otimes \V_j \cong \bigoplus_{\ell = 0}^{2j} \V_\ell  
\end{equation}
where $\ell$ takes integer values going from $0$ to $2j = n - 1$.   Thus, with respect to the conjugation representation, we can decompose $L^2(S^3)$ into irreducible representations of $\SU(2)$ by combining Equations \eqref{eq:decomposition_1} and \eqref{eq:decomposition_2}, obtaining
\begin{equation}
\label{eq:decomposition_3}
   L^2(S^3) \cong \bigoplus_{n = 1}^\infty \bigoplus_{\ell = 0}^{n-1} \V_\ell.
\end{equation}

The summand $\V_\ell$ has a basis of eigenvectors for $L_3$ with eigenvalues taking integer values $m$ ranging from $-\ell$ to $\ell$.      Thus $L^2(S^3)$ has an orthonormal basis of states $|n , \ell , m\rangle$ where:
\begin{itemize}
\item $n$ ranges over positive integers;
\item $\ell$ ranges from $0$ to $n-1$ in integer steps;
\item $m$ ranges from $-\ell$ to $\ell$ in integer steps.  
\end{itemize}
From the calculations thus far we have
\begin{equation}
\label{eq:hydrogen_spinless}
\begin{array}{ccl}  
A^2 |n, \ell, m \rangle &=& B^2 |n, \ell, m \rangle \; = \;
 \fth(n^2-1) |n, \ell, m \rangle  \\  [8pt]
H_0 |n, \ell, m \rangle &=& \displaystyle{ - \frac{1}{2n^2} \, |n, \ell, m \rangle } \\ [8pt]
    L^2 |n, \ell, m \rangle &=& \ell(\ell + 1) \, |n , \ell, m \rangle  \\ [3pt]
    L_3 |n , \ell, m \rangle &=& m \, |n , \ell, m \rangle  .
\end{array}
\end{equation}
The last three relations are familiar from work on the hydrogen atom.  In this context
\begin{itemize}
\item  $n$ is the `principal quantum number',
\item $\ell$ is the `azimuthal quantum number', 
\item $m$ is the `magnetic quantum number'. 
\end{itemize} 
The operator $H_0$ corresponds to the Hamiltonian of the hydrogen atom, with the electron treated as spinless, and the states $|n, \ell, m\rangle$ are a well-known basis of the hydrogen atom's bound states.  

\subsection{Digression on the Duflo isomorphism}
\label{subsec:duflo}

Why is there an extra $\fth$ in the quantum Hamiltonian
\[
H \;=\; - \frac{1}{8(A^2 + \fth)} \;=\;- \frac{1}{8(B^2 + \fth)} 
\]
not present in the classical case?  From a pragmatic viewpoint, we need this to match the spectrum obtained in the usual approach to quantizing the Kepler problem, which agrees with experiment.  We also need some term like this to avoid dividing by zero.    But there is yet another explanation: it arises from the Duflo isomorphism.

Suppose $G$ is a connected Lie group with Lie algebra $\g$.  The Poincar\'e--Birkhoff--Witt theorem gives a natural linear map from the polynomial algebra $S(\g)$ to the universal enveloping algebra $U(\g)$.  This map is an isomorphism of vector spaces, but clearly not of algebras: $S(\g)$ is commutative while $U(\g)$ is not.   This map is compatible with the natural representation of $G$ on these spaces, so it restricts to a vector space isomorphism 
\begin{equation}
    F \maps S(\g)^G \to U(\g)^G 
\end{equation}
where the superscript indicates the $G$-invariant subspace.   Both  $S(\g)^G$ and $U(\g)^G$ are commutative algebras, and indeed $U(\g)^G$ is the center of $U(\g)$.  However, $F$ is not an algebra homomorphism.   Nonetheless, Duflo \cite{Duflo} proved that in some cases we can compose $F$ with a linear map 
\begin{equation}
 E \maps S(\g)^G \to S(\g)^G 
\end{equation}
to get an algebra isomorphism
\[   F \circ E \maps S(\g)^G \to U(\g)^G .\]
Later Kontsevich  \cite{Kontsevich} showed that this works for all finite-dimensional Lie algebras.  Calaque and Rossi have written a useful pedagogical account \cite{CalaqueRossi}.   

The map $E$ is defined as follows.   First, for any $x \in \g$ we have a linear map 
\begin{equation}
\begin{array}{cccl}
\ad_x \maps & \g &\to& \g  \\
                       & y & \mapsto & [x,y] 
\end{array}
\end{equation}
Second, note that linear functionals on $\g$ are vectors in $\g^\ast$, so they give constant vector fields on $\g^\ast$.   More generally, polynomial functions on $\g$ give constant-coefficient differential operators on $\g^\ast$.   These differential operators act on $S(\g)$, since this can be identified with the algebra of polynomial functions on $\g^\ast$.   Even more generally, entire functions on $\g$ such as
\begin{equation}
  \tilde{J}(x) =  \det\left( \frac{e^{\ad_x/2} - e^{-\ad_x/2}}{\ad_x}\right)  
\end{equation}
act as operators on $S(\g)$, since we can expand them as Taylor series, and only finitely many terms give a nonzero result when applied to any particular element of $S(\g)$.   Since $\tilde{J}$ is $G$-invariant, the operator on $S(\g)$ coming from this particular function restricts to an operator on $S(\g)^G$, which is the desired linear map $E \maps S(\g)^G \to S(\g)^G$.

In the case $\g = \sl(2,\C)$, the algebra $S(\g)^G$ is generated by the element $J_1^2 + J_2^2 + J_3^2$ where $J_j = \tfrac{i}{2} \sigma_j$.   Applying $E$ to this element we obtain
\begin{equation}
    \tilde{J}^2 = J_1^2 + J_2^2 + J_3^2 + \fth .
\end{equation}
This explains the extra $\fth$ in the hydrogen atom Hamiltonian  \cite{RosaVitale}.

Examining the details of the map $E$, one can see that the value $\fth$ arises from the fact that
\begin{equation}
\label{24}
 \frac{e^{x/2} - e^{-x/2}}{x} = 1 + \frac{x^2}{24} + \cdots 
\end{equation}
together with the fact that each of the three elements $J_j^2$ has trace $2$ in the adjoint representation, giving a correction of
\[      3 \times 2 \times \tfrac{1}{24} = \fth .\]
In fact, the appearance of the number $24$ in Equation \eqref{24} is closely connected to role of that number in conformal field theory, the theory of modular forms, topology, and other subjects.    Thus, the $\fth$ in the hydrogen atom Hamiltonian is part of a much larger story.  

\section{The hydrogen atom --- with spin}
\label{sec:spin}

Next we include the electron's spin in our treatment of the hydrogen atom.  To do this, we merely tensor the Hilbert space of the previous section, $L^2(S^3)$, with a copy of $\C^2$ describing the electron's spin.  The resulting space $L^2(S^3) \otimes \C^2$ is the Hilbert space of bound states of a \emph{spinor-valued} version of the Schr\"odinger equation for the hydrogen atom.  This is simplification of a more careful treatment using the Dirac equation: it neglects all spin-dependent terms in Hamiltonian, such as spin-orbit interactions.   These spin-dependent terms give corrections that go to zero in the limit where the speed of light approaches infinity.  In this sense, we are giving a nonrelativistic treatment of the hydrogen atom, but taking into account the fact that the electron is a spin-$\hf$ particle.

The Hilbert space $L^2(S^3) \otimes \C^2$ becomes a unitary representation of $\SU(2)$ in three important ways.  The first two come from the actions of $\SU(2)$ on $L^2(S^3)$ by left and right translation, as described in the previous section.  The third comes from the natural action of $\SU(2)$ on $\C^2$.   All three of these actions of $\SU(2)$ on $L^2(S^3) \otimes \C^2$ commute with each other.  We thus get a unitary representation of $\SU(2) \times \SU(2) \times \SU(2)$ on $L^2(S^3) \otimes \C^2$.

It is useful to spell this out at the Lie algebra level.  In the previous section we introduced self-adjoint operators $A_j$ and $B_j$ on $L^2(S^3)$ with commutation relations given in Equation \eqref{eq:AB_commutation}.  These are the self-adjoint generators of the left and right translation actions of $\SU(2)$, respectively.   We now tensor these operators with the identity on $\C^2$ and obtain operators on $L^2(S^3) \otimes \C^2$, which by abuse of notation we denote with the same names: $A_j$ and $B_j$.  We also introduce `spin angular momentum' operators
\begin{equation}
  S_j = 1 \otimes \hf \sigma_j   
\end{equation}
on $L^2(S^3) \otimes \C^2$.   These obey the following commutation relations:
\begin{equation}
\label{eq:commutation}
\begin{array}{cclcccl}  
[A_j, A_k] &=&  i\epsilon_{jk\ell} A_\ell  &\quad &  [A_j, B_k] &=& 0 \\ [2pt]   \relax
[B_j, B_k] &=&  i\epsilon_{jk\ell} B_\ell &&  [A_j, S_k] &=& 0  \\ [2pt] \relax
[S_j, S_k] &=&  i\epsilon_{jk\ell} S_\ell && [B_j, S_k] &=& 0.
\end{array}
\end{equation}
We define `orbital angular momentum' operators
\begin{equation}
     L_j = A_j + B_j  
 \end{equation}
which are just those of the previous section tensored with the identity on $\C^2$.
These obey 
\begin{equation}
\begin{array}{ccl}  
[L_j, L_k] &=&  i\epsilon_{jk\ell} L_\ell \\  [2pt] \relax
 [S_j, S_k] &=& i \epsilon_{jk\ell} S_\ell \\  [2pt] \relax
[L_j, S_k] &=&  0  .
\end{array}
\end{equation}
We also define `total angular momentum' operators
\begin{equation}
   J_j = L_j + S_j   
\end{equation}
which obey
\begin{equation}
    [J_j, J_k] = i \epsilon_{jk\ell} J_\ell .
\end{equation}
Finally, we define a Hamiltonian for the hydrogen atom 
\begin{equation}
\label{eq:hamiltonian_quantum_1}
   H \; = \; - \frac{1}{8(A^2 + \fth)} \; = \; - \frac{1}{8(B^2 + \fth)}  
\end{equation}
which commutes with all the operators $A_j, B_j, S_j$ and thus also $L_j$ and $J_j$.   We have 
\begin{equation}
  H = H_0 \otimes 1  
\end{equation}
where $H_0$ is the Hamiltonian for the hydrogen atom with a spin-0 electron, as in Equation \eqref{eq:hamiltonian_without_spin}.   The spectrum of $H$ is the same as that of $H_0$; only the multiplicity of each eigenvalue has doubled.

From the direct sum decomposition in Equation \eqref{eq:decomposition_3} we obtain
\begin{equation}
\label{eq:decomposition_4}
   L^2(S^3) \otimes \C^2 \cong 
   \bigoplus_{n = 1}^\infty \bigoplus_{\ell = 0}^{n-1} \V_\ell \otimes \C^2 . 
\end{equation}
The basis $|n, \ell, m \rangle$ of $L^2(S^3)$ tensored with the standard basis of $\C^2$ gives an orthonormal basis $|n , \ell, m, s \rangle$ of $L^2(S^3) \otimes \C^2$ where:
\begin{itemize}
\item the principal quantum number $n$ ranges over positive integers;
\item the azimuthal quantum number $\ell$ ranges from $0$ to $n-1$ in integer steps;
\item the magnetic quantum number $m$ ranges from $-\ell$ to $\ell$ in integer steps; 
\item the spin quantum number $s$ is $+\frac{1}{2}$ or $-\frac{1}{2}$.  
\end{itemize}
Extending Equation \eqref{eq:hydrogen_spinless} to this situation we see
\begin{equation}
\label{eq:hydrogen}
 \begin{array}{ccl}  
 A^2 |n, \ell, m, s \rangle &=& B^2 |n, \ell, m, s \rangle \; =  \; 
 \fth( n^2 - 1) |n, \ell, m, s\rangle  \\  [8pt]
H |n, \ell, m, s \rangle &=& \displaystyle{ - \frac{1}{2n^2}\,  |n, \ell, m, s \rangle } \\ [12pt]
    L^2 |n, \ell, m, s\rangle &=& \ell(\ell + 1) |n , \ell, m, s \rangle  \\ [3pt]
    L_3 |n , \ell, m, s \rangle &=& m |n , \ell, m, s \rangle  \\ [3pt]
    S^2 |n , \ell, m, s \rangle &=& \frac{3}{4} |n , \ell, m, s \rangle  \\  [3pt]
    S_3 |n , \ell, m, s \rangle &=& s |n , \ell, m, s \rangle .
\end{array}
\end{equation}

Combining this with the textbook treatment of the hydrogen atom, it follows that $L^2(S^3) \otimes \C^2$ is indeed unitarily equivalent to the subspace of $L^2(\R^3) \otimes \C^2$ consisting of bound states of the spinor-valued Schr\"odinger equation
\begin{equation}
   i \frac{\partial \psi}{\partial t} = -\frac{1}{2} \nabla^2 \psi - \frac{1}{r} \psi  
\end{equation}
with the operators $H, L_j$ and $S_j$ having these definitions:
\begin{equation}
\begin{array}{ccl}    
H &=& \displaystyle{ -\frac{1}{2} \nabla^2 - \frac{1}{r} }  \\ [12pt]
L_j &=&  \displaystyle{ -i\epsilon_{jk\ell} x_k \frac{\partial}{\partial x_\ell} } \\ [10pt]
S_j &=& \frac{1}{2} \sigma_j   .
\end{array}
\end{equation}
In short, the operator defined in Equation \eqref{eq:hamiltonian_quantum_1} is
unitarily equivalent to the Hamiltonian on bound states of the hydrogen atom defined above.

\section{Spinors on the 3-sphere}
\label{sec:geometry}

In preparation for relating the hydrogen atom to quantum field theory on $\R \times S^3$, we now bring out the geometrical content of the previous section.  In Section \ref{sec:spinless} we studied the operator $A^2 = B^2$ on $L^2(S^3)$.    Now we shall see that this operator is proportional to the Laplacian on the unit 3-sphere.   We can think of elements of $L^2(S^3) \otimes \C^2$ as spinor fields on $S^3$ if we trivialize the spinor bundle using the  action of $\SU(2)$ as right translations on $S^3 \cong \SU(2)$.   Kronheimer \cite{Kronheimer} discussed the Dirac operator $\d$ on these spinor fields.   We recall this here and show that the hydrogen atom Hamiltonian, thought of as an operator on $L^2(S^3) \otimes \C^2$, is 
\[    H = - \frac{1}{2 (\d - \hf)^2}   . \]
We also determine the eigenvalues of the Dirac operator on the 3-sphere.

We begin with the Laplacian on the 3-sphere.     Recall from Section \ref{sec:spinless} that $-iB_j$ is a basis of left-invariant vector fields on $S^3$.    Each vector field $-iB_j$ gives a tangent vector at the identity of $\SU(2)$, namely $-\frac{i}{2} \sigma_j \in \su(2)$.    What is the length of this vector if we give $\SU(2)$ the usual Riemannian metric on the unit 3-sphere?   Exponentiating this vector we get $\exp(-\frac{i}{2} \sigma_j t)$, which is the identity precisely when $t$ is an integer times $4\pi$.  Since a great circle on the unit sphere has circumference $2\pi$, this vector must have length $\frac{1}{2}$.     It follows that the vector fields
\begin{equation}
\label{eq:X_vs_A}
       X_j = -2i B_j 
\end{equation}
have unit length everywhere, and one can check that they form an orthonormal basis of vector fields on $S^3$.   We thus define the (positive definite) Laplacian on $S^3$ to be the differential operator
\begin{equation}
\label{eq:laplacian}        
 \Delta =  - \sum_{i = 1}^3 X_j^2  = 4B^2.
\end{equation}

Combining Equations \eqref{eq:casimir} and \eqref{eq:laplacian}, we see that $\Delta$ acts as multiplication by $4j(j+1)$ on the subspace  $\V_j \otimes \V_j \subset L^2(S^3)$.   Setting $n = 2j+1$ as usual,
\[  4j(j+1) =  4j^2 + 4j  = n^2 - 1\]
so 
\begin{equation}
\label{eq:laplacian_eigenvalues}
     \phi \in \V_j \otimes \V_j \implies \Delta \phi = (n^2 - 1) \phi
\end{equation}
and the eigenvalues of the Laplacian on $L^2(S^3)$ are $n^2 - 1$ where $n$ ranges over all positive integers.  

Tensoring $\Delta$ with the identity we obtain a differential operator on $L^2(S^3) \otimes \C^2$, which by abuse of notation we again call $\Delta$.   This has the same spectrum as the Laplacian on $L^2(S^3)$.   Equations \eqref{eq:hamiltonian_quantum_1} and \eqref{eq:laplacian} then give this formula for the Hamiltonian of the hydrogen atom as an operator on $L^2(S^3) \otimes \C^2$:
\begin{equation}
\label{eq:hamiltonian_quantum_2}
    H =  - \frac{1}{2(\Delta + 1)} .
\end{equation}

Next we turn to the Dirac operator \cite{LawsonMichelson}.   Up to isomorphism there is only one choice of spin structure on $S^3$.   We can trivialize the tangent bundle of $S^3 \cong \SU(2)$ using left translations.   This lets us identify the oriented orthonormal frame bundle of $S^3$ with the trivial bundle $S^3 \times \SO(3) \to S^3$.  This gives a way to identify the spin bundle on $S^3$ with the trivial bundle $S^3 \times \SU(2) \to S^3$.  This in turn lets us identify spinor fields on $S^3$ with $\C^2$-valued functions.     

There are at least two important connections on the tangent bundle of $S^3$.  One is the Cartan connection: a vector field is covariantly constant with respect to this connection if and only if it is invariant under left translations on $S^3 \cong \SU(2)$.    The other is the Levi--Civita connection, which is the unique torsion-free connection for which parallel translation preserves the metric.  Parallel translation with respect to the Cartan connection also preserves the metric, but the Cartan connection is flat and has torsion, while the Levi--Civita connection is curved and torsion-free.   

Each of these connections lifts uniquely to a connection on the spin bundle and then gives a Dirac-like operator.  The Cartan connection gives covariant derivative operators $\nabla^c_j$ on $L^2(S^3) \otimes \C^2$ with 
\begin{equation}
  \nabla^c_j = X_j \otimes 1  
\end{equation}
while the Levi--Civita connection gives covariant derivative operators $\nabla_j$ with
\begin{equation}
   \nabla_j = X_j \otimes 1 + 1 \otimes \tfrac{i}{2}\sigma_j .
\end{equation}
We can define a self-adjoint version of the Dirac operator $\d$ on $L^2(S^3) \otimes \C^2$ using the Levi--Civita connection:
\begin{equation}
       \d = (1 \otimes (-i\sigma_j)) \nabla_j ,
\end{equation}
where as usual we sum over repeated indices.   On the other hand, Kronheimer \cite{Kronheimer} defined a Dirac-like operator $D$ using the Cartan connection:
\begin{equation}
\label{eq:Dirac-like_operator}
          D = (1 \otimes  (-i\sigma_j)) \nabla^c_j.
\end{equation}
An easy calculation shows how $\d$ and $D$ are related:
\begin{equation}
\label{eq:dirac_operators}
\begin{array}{ccl}
\d &=&   (1 \otimes (-i\sigma_j)) \nabla_j  \\ [3pt]
&=& (1 \otimes (-i\sigma_j)) \left( \nabla^c_j  + 1 \otimes \tfrac{i}{2}\sigma_j\right) \\ [3pt]
&=& D + \frac{3}{2}  
\end{array}
\end{equation}
where we use $\sigma_j^2 = 1$ and the 3-dimensionality of space.  

Let us compute $D^2$.   Using the identities
\begin{equation}
  \sigma_j \sigma_k = \delta_{jk} + i \epsilon_{jk\ell} \sigma_\ell , \quad 
  \epsilon_{jk\ell} X_j X_k = 2X_\ell   , \quad D = X_j \otimes (-i \sigma_j),
\end{equation}
we obtain
\begin{equation} 
\begin{array}{ccl}
D^2 &=& -X_j X_k \otimes \sigma_j \sigma_k   \\   [2pt]
&=& -X_j X_k \otimes (\delta_{jk} + i \epsilon_{jk\ell} \sigma_\ell )  \\   [2pt]  
&=& -X_j X_j \otimes 1 \, - \, 2i X_\ell \otimes \sigma_\ell  \\   [2pt]
&=& \Delta - 2 D.
\end{array}
\end{equation}
It follows that $\Delta =  D(D+2)$, so
\begin{equation}
\label{eq:Delta_D_and_dirac}
 \Delta + 1 = (D+1)^2 = (\d - \hf)^2 .
\end{equation}
Combining this fact with Equation \eqref{eq:hamiltonian_quantum_2} we can express the hydrogen atom Hamiltonian in terms of the Dirac operator on the 3-sphere:
\begin{equation}
\label{eq:hamiltonian_quantum_3}
H  \; = \; - \frac{1}{2(\d - \hf)^2} .
\end{equation}

Next let us study the eigenvectors and eigenvalues of the Dirac operator.    From Equation \eqref{eq:decomposition_1} we have
\begin{equation}
\label{eq:decomposition_4}
   L^2(S^3) \otimes \C^2 \cong \bigoplus_j  \V_j \otimes \V_j \otimes \C^2   
\end{equation}
where $j = 0, \hf, 1, \frac{3}{2}, \dots$ and $V_j$ is the spin-$j$ representation of $\SU(2)$.
Since $\d$ maps each finite-dimensional subspace $\V_j \otimes \V_j \otimes \C^2$ to itself and is self-adjoint on these subspaces, each of these subspaces has an orthonormal basis of eigenvectors.  Suppose $\psi \in \V_j \otimes \V_j \otimes \C^2$ has
\[    \d \psi = \lambda \psi. \]
Then by Equations \eqref{eq:laplacian_eigenvalues} and \eqref{eq:Delta_D_and_dirac} we have
\[   (\lambda - \hf)^2 \psi = (\d - \hf)^2 \psi = (\Delta + 1)\psi = n^2 \psi \]
where $n = 2j+1$, so $\lambda - \hf = \pm n $.  Thus, the only eigenvalues of $\d$ on the subspace $\V_j \otimes \V_j \otimes \C^2$ are $\pm n + \hf$, or in other words
\begin{equation}
\label{eq:dirac_eigenvalues}
    \psi \in \V_j \otimes \V_j  \otimes \C^2 \implies \d \psi = 
    \lambda \psi \text{\; for \; } \lambda = \pm (2j+1) + \hf  .
\end{equation}

We could go further and explicitly diagonalize the Dirac operator on $S^3$; for details see Di Cosmo and Zampini \cite{DiCosmoZampini}.  Instead we summarize two results from Kronheimer \cite{Kronheimer} which do more to clarify the overall picture.

First, Kronheimer shows that the spectrum of $\d$ is symmetric about the origin. 
To do this he identifies $\C^2$ with the quaternions, and thus $L^2(S^3) \otimes \C^2$ with 
a space of quaternion-valued functions on the 3-sphere; then quaternionic conjugation 
gives a conjugate-linear operator
\begin{equation}
  \dagger \maps L^2(S^3) \otimes \C^2 \to L^2(S^3) \otimes \C^2 
\end{equation}
with $\dagger^2 = 1$.  He then proves a result about the operator $D$ that implies
\begin{equation}  
 \d \psi = \lambda \psi \; \iff  \; \d \psi^\dagger = -\lambda \psi^\dagger .
\end{equation}

Second, he proves a result about the operator $D$ that implies that the eigenspace
\begin{equation}
\label{eq:dirac_eigenspaces}
    F_\lambda = \{ \psi \in L^2(S^3) \otimes \C^2 : \; \d \psi = \lambda \psi \} 
\end{equation}
has dimension
\begin{equation}
\label{eq:dimension_of_dirac_eigenspaces}
    \dim F_\lambda = (\lambda+\hf)(\lambda-\hf) 
\end{equation}
when $\lambda \in \Z + \hf$ and zero otherwise.  Thus every number in $\Z + \hf$ is an eigenvalue of $\d$ 
except $\pm\hf$.   Since Equation \eqref{eq:dirac_eigenvalues} implies that
\begin{equation}    
\label{eq:decomposition_into_dirac_eigenspaces}
\V_j \otimes \V_j \otimes \C^2 = F_{n+\hf} \oplus F_{-n +\hf} 
\end{equation}
where $n = 2j+1$, this additional result implies that these two summands have dimensions
$n(n+1)$ and $n(n-1)$, respectively.  Their total dimension is $2n^2$, as we already knew.

\subsection{Digression on quaternions}
\label{subsec:quaternions}

The Hilbert space of bound states for the hydrogen atom, which we have been treating as $L^2(S^3) \otimes \C^2$, also has an elegant quaternionic description.   We can identify $S^3$ with the unit sphere in the quaternions, and $\C^2$ with the quaternions $\HH$ themselves.  Indeed we can identify $\C^2$ with $\HH$ in such a way that multiplication by $-i\sigma_1, -i\sigma_2$ and $-i\sigma_3$ correspond to left multiplication by the quaternions $i,j$ and $k$, respectively, while multiplication by $i$ corresponds to \emph{right} multiplication by the quaternion $i$.   In this way we identify 
$L^2(S^3) \otimes \C^2$ with $L^2(S^3) \otimes \HH$, or equivalently with the space of functions
\[         \psi \maps S^3 \to \HH \]
such that
\[          \int_{S^3} |\psi(q)|^2 < \infty .\]
As we shall see, a dense subspace of functions of this sort extend to functions $\psi \maps \HH - \{0\} \to \HH$ that obey a quaternionic analogue of the Cauchy--Riemann equation.     This lets us apply ideas from quaternionic analysis \cite{FrenkelLibine,Sudbery}. 

For any open set $O \subseteq \HH$, a function $\psi \maps O \to \HH$ is said to be `regular' if it is differentiable in the usual real sense and the quaternionic Cauchy--Riemann equation holds:
\begin{equation}
   \frac{\partial \psi}{\partial q_0} +  i \frac{\partial \psi}{\partial q_1} + j \frac{\partial \psi}{\partial q_2} + k \frac{\partial \psi }{\partial q_3} = 0. 
\end{equation}
Here $q_0, \dots, q_3$ are real coordinates on $\HH$ for which any quaternion $q$ is of the form
\begin{equation}
   q = q_0 + q_1 i + q_2 j + q_3 k .
\end{equation}
Sudbery shows \cite[Thm.\ 1]{Sudbery} that any regular function is infinitely differentiable in the usual real sense, in fact real-analytic.    

Let $U_k$ be the space of regular functions on $\HH - \{0\}$ that are homogeneous of degree $k \in \Z$:
\begin{equation}
 \psi(\alpha q) = \alpha^k f(q)  \qquad \qquad \forall q \in \HH - \{0\}, \alpha \in \R - \{0\}. 
\end{equation}
Clearly any function $\psi \in U_k$ is determined by its restriction to the unit sphere $S^3 \subset \HH$.  In the proof of his Thm.\ 7, Sudbery shows something less obvious:  the restriction is an eigenfunction of the Dirac-like operator $D$ defined in Equation \eqref{eq:Dirac-like_operator}.    To do this, he writes the quaternionic Cauchy--Riemann operator
\begin{equation}
 \overline{\partial} = \frac{\partial}{\partial q_0} +  i \frac{\partial}{\partial q_1} + j \frac{\partial}{\partial q_2} + k \frac{\partial}{\partial q_3} 
\end{equation}
in something like polar coordinates, involving a radial derivative but also the operator $D$.   The radial derivative of the homogeneous function $\psi$ equals $k \psi$ when restricted to $S^3$.  Using this fact and $\overline{\partial} \psi = 0$, he computes $D (\psi|_{S^3})$ and shows that
\begin{equation}
\psi \in U_k \implies D(\psi|_{S^3}) = k \, \psi|_{S^3}
\end{equation}
(although he uses different notation).
Since $\d = D + \tfrac{3}{2}$, this implies that functions in $U_k$ restrict to eigenfunctions of $\d$ with eigenvalue $k + \frac{3}{2}$.   However, in Thm.\ 7 Sudbery also shows that treating $U_k$ as a complex vector space,
\begin{equation}
\dim(U_k) = (k+1)(k+2)
\end{equation}
which by Equation \eqref{eq:dimension_of_dirac_eigenspaces} is exactly the dimension of the
eigenspace of $\d$ with eigenvalue $k + \frac{3}{2}$.   

We thus obtain an isomorphism between $U_k$ and the eigenspace of the operator $\d$ on $L^2(S^3) \otimes \C^2$ with eigenvalue $k + \frac{3}{2}$.   Thus, we get a correspondence between hydrogen atom bound states that are \emph{finite} linear combinations of energy eigenstates and regular functions $\psi \maps \HH - \{0\} \to \HH$ that are finite linear combinations of homogeneous ones.   For infinite linear combinations, the analysis question arises: which infinite sums of homogeneous regular functions $\psi \maps \HH - \{0\} \to \HH$ converge to well-defined regular functions?  

It is worth adding that Frenkel and Libine have also studied bound states of the hydrogen atom using quaternionic analysis \cite[Sec.\ 2.9]{FrenkelLibine}, giving an explicit transform sending homogeneous harmonic functions $\psi \maps \HH - \{0\} \to \C$ to functions in $L^2(\R^3)$ that are eigenvectors of the hydrogen atom Hamiltonian  $- \frac{1}{2} \nabla^2 - \frac{1}{r}$.   Their treatment does not incorporate the spin of the electron, but it could be adapted to do so, by working with regular functions $\psi \maps \HH - \{0\} \to \HH$ (which are automatically harmonic).

\section{The Weyl equation on the Einstein universe }
\label{sec:einstein}

The `Einstein universe' is a name for the manifold $\R \times S^3$ with Lorentzian metric $dt^2 - ds^2$, where $dt^2$ is the usual Riemannian metric on $\R$ and $ds^2$ is the Riemannian metric on the unit sphere.  The Einstein universe has a lot of symmetry: the group $\R \times \SO(4)$ acts as isometries, and the universal cover of $\SO(2,4)$ acts as  conformal transformations.    The former group is the one relevant here, since it acts on the bound states of the hydrogen atom, but it is worth noting that the latter, larger group acts as `dynamical symmetries' of the Kepler problem \cite{Cordani, GuilleminSternberg}: that is, as unitary operators on $L^2(S^3) \otimes \C^2$ that do not all commute with the Hamiltonian.

The Weyl equation is a variant of the Dirac equation that describes massless spin-$\hf$ particles that are chiral, i.e., have an inherent handedness.   We can trivialize the bundle of Weyl spinors over the Einstein universe, using right translations on the group $\R \times \SU(2) \cong \R \times S^3$ to identify every fiber of this bundle with the vector space $\C^2$.   Using this trivialization we can write the left-handed Weyl equation as
\begin{equation}
      \frac{\partial \psi}{\partial t} = -i\d \psi 
\end{equation}
where $\psi \maps \R \times S^3 \to \C^2$ and $\d$ is as defined in the previous section.
The Weyl equation also comes in a right-handed form differing by a sign, $\frac{\partial \psi}{\partial t} = i\d \psi$.   We choose henceforth to work with the left-handed form; this is an arbitrary convention.

We now recall how the relativistic quantum mechanics of a single left-handed massless spin-$\hf$ particle works on the Einstein universe \cite{PaneitzSegal}.    We take $L^2(S^3) \otimes \C^2$ as the Hilbert space and $\d$ as the Hamiltonian.  Since $\d$ is self-adjoint, this Hamiltonian generates a 1-parameter group of unitary operators 
\begin{equation}
      U(t) = \exp(-it \d) . 
\end{equation}
Given any $\psi_0 \in L^2(S^3) \otimes \C^2$, if we let $\psi_t = U(t) \psi_0$
and define a function $\psi \maps \R \times S^3 \to \C^2$ by
\begin{equation}
     \psi(t,x) = \psi_t (x) ,
\end{equation}
then this function will be a distributional solution of the left-handed Weyl equation.

As seen in Equation \eqref{eq:hamiltonian_quantum_3}, the hydrogen atom Hamiltonian $H$ is a function of the Hamiltonian $\d$ for the Weyl equation:
\begin{equation}
    H = - \frac{1}{2 (\d - \hf)^2}.   
\end{equation}
Thus, not only the Hilbert space but also the dynamics of the bound states of the hydrogen atom can be expressed in terms those for the Weyl equation on the Einstein universe.    However, not all the symmetries we have found for the Hamiltonian $H$ are symmetries of $\d$: this is possible because while $H$ is a function of $\d$, $\d$ is not a function of $H$.   Let us make this precise.

In Section \ref{sec:spin} we made $L^2(S^3) \otimes \C^2$ into a unitary representation of $\SU(2)$ in three commuting ways: via left translations, via right translations, and via the spin-$\hf$ representation on $\C^2$.   The self-adjoint generators of these three representations are $A_j$, $B_j$ and $S_j$, respectively.     With the help of Equation \eqref{eq:dirac_operators} we can write the Dirac operator in terms of these:
\begin{equation}
\label{eq:dirac_operator}
\begin{array}{ccl}
\d &=& D + \tfrac{3}{2}  \\  [2pt]
&=& i X_j \otimes \sigma_j + \tfrac{3}{2} \\  [4pt]
&=& 4B_j S_j + \tfrac{3}{2}.
\end{array}
\end{equation}
Using this and the commutation relations listed in Equation \eqref{eq:commutation},  we see $\d$ commutes with the operators $A_j$.  It does not commute with $B_j$ or $S_j$ separately, but it commutes with $B_j + S_j$, since
\begin{equation}
\begin{array}{ccl}
[B_j + S_j, B_k S_k] &=& [B_j, B_k] S_k + B_k [S_j, S_k] \\ [2pt]
&=& i \epsilon_{jk\ell} B_\ell S_k + i \epsilon_{jk\ell} B_k S_\ell  \\ [2pt]
&=& 0.
\end{array}
\end{equation}
It follows that $\d$ commutes with the unitary representation $\rho$ of $\SU(2) \times \SU(2)$ on $L^2(S^3) \otimes \C^2$ whose self-adjoint generators are $A_j $ and $B_j  + S_k$.  Explicitly, this representation is given by
\begin{equation}
\label{eq:new_representation} 
 (\rho(g_1,g_2) \psi)(g) = g_2 \psi(g_1^{-1} g g_2). 
\end{equation}
Geometrically, this representation arises from the natural way to lift the left and right translation actions of $\SU(2)$ on $S^3$ to the spinor bundle of $S^3$.   The asymmetry between left and right here may seem puzzling, but it arose because we arbitrarily chose to trivialize the spinor bundle of $S^3$ using the action of $\SU(2)$ as \emph{left} translations.  Thus, the action of $(g_1,1) \in \SU(2) \times \SU(2)$ on $\psi \in L^2(S^3) \otimes \C^2$ merely left translates $\psi$, while the action of $(1,g_2)$ not only right translates $\psi$ but acts on its value by $g_2$.  We can notice this at the Lie algebra level by noting that the operators $A_j$, which generate left translations, annihilate all the constant spinor-valued functions on $S^3$, which are the elements $\psi \in V_0 \otimes V_0 \otimes \C^2$, while the operators $B_j + S_j$ do not.

Summarizing, this is what we have seen so far.   Made into representations of $\SU(2) \times \SU(2)$ as above, the Hilbert space of bound states of hydrogen atom and the Hilbert space for the left-handed Weyl equation on the Einstein universe are unitarily equivalent.   Moreover, we can express the Hamiltonian for the hydrogen atom in terms of that for the left-handed Weyl equation.

All this is fine mathematics, but there is a physical problem, noticed already by Dirac in a related context: the spectrum of $\d$ is unbounded below, giving states of arbitraily large negative energy.  One widely accepted solution \cite[Sec.\ 6.5]{BSZ} is to modify the complex structure on the Hilbert space, multiplying it by $-1$ on the negative-frequency solutions of the Weyl equation: that is, the subspace of $L^2(S^3) \otimes \C^2$ spanned by eigenvectors $\d \psi = \lambda \psi$ with $\lambda < 0$.   This is an updated version of Dirac's original idea of treating antiparticles as `holes in the sea of negative-energy particles,' or the later idea of switching annihilation and creation operators for negative-frequency solutions.  While typically introduced along with second quantization, the idea of modifying the complex structure also makes sense in the relativistic quantum mechanics of a single particle, which is the context here.

To modify the complex structure on $L^2(S^3) \otimes \C^2$, we use the functional calculus to define an operator 
\begin{equation}     
S = \frac{\d}{|\d|}  
\end{equation}
on this Hilbert space.  This is $1$ on eigenvectors of $\d$ with positive eigenvalue and $-1$ on those with negative eigenvalue; we have seen that $0$ is not an eigenvalue of $\d$, so 
$S$ is well-defined.  If we then define an operator
\begin{equation}       
j = i S ,
\end{equation}
since $S$ is both unitary and self-adjoint, it follows that $j$ is both unitary ($jj^* = j^* j = 1$) and skew-adjoint ($j^* = -j$), and thus a complex structure ($j^2 = -1$).   We henceforth use $\H$ to stand for $L^2(S^3) \otimes \C^2$ made into a complex Hilbert space with the same norm and this new complex structure $j$.  

The operators $\d$ and $|\d|$ are still complex-linear on $\H$, despite the new complex structure, since they commute with $i$ and $S$, and thus $j$.   The operator $\d$ is still self-adjoint on $\H$, since it has an orthonormal basis of eigenvectors with real eigenvalues.   The operator $|\d|$ is not only self-adjoint but positive definite on $\H$, since
\begin{equation}     
\d \psi = \lambda \psi  \; \implies \; |\d| \psi = |\lambda| \psi .
\end{equation}
In fact the operator $|\d|$ generates $U(t)$ as a one-parameter unitary group on $\H$, because 
\begin{equation}      
\exp(-jt|\d|) = \exp(-it\d) = U(t) .
\end{equation}
Thus negative energy states have been eliminated, without changing the time evolution operators $U(t)$ at all, by changing the Hamiltonian from $\d$ to $|\d|$ and
simultaneously changing the complex structure from $i$ to $j$.   

Since all the operators $\rho(g_1,g_2)$ on $L^2(S^3) \otimes \C^2$ given by Equation \eqref{eq:new_representation} commute with $\d$, they also commute with $S$ and thus  with the new complex structure $j = iS$.   Thus $\rho$, which began life as a unitary representation of $\SU(2) \times \SU(2)$ on $L^2(S^3) \otimes \C^2$, gives a complex-linear representation of this group on $\H$, which we call $\rho_\H$.   This representation $\rho_\H$ is unitary, since the norm on $\H$ is the same as that on $L^2(S^3) \otimes \C^2$, and any norm-preserving invertible linear operator on a Hilbert space is unitary.

Furthermore, the representation $\rho_\H$ is unitarily equivalent to $\rho$.  This is a nontrivial fact, because the unitary equivalence between them is not the identity operator.  Indeed the map 
\begin{equation}
\begin{array}{rcl}
I \maps L^2(S^3) \otimes \C^2 &\to  &\H  \\
                              \psi &\mapsto & \psi 
\end{array}
\end{equation}
is not even complex linear: it is complex linear on the $+1$ eigenspace of $S$ but conjugate-linear on the $-1$ eigenspace.   To correct for this, we use a conjugate-linear map 
\begin{equation} 
\begin{array}{c}
C \maps L^2(S^3) \otimes \C^2 \to   L^2(S^3) \otimes \C^2   \\ [4pt]
       (C \psi)(g) = \epsilon \, \overline{\psi}(g)   
\end{array}
\end{equation}
where we regard $\psi$ as a $\C^2$-valued function on $S^3$, let $\overline{\psi}$ denote its componentwise complex conjugate, and multiply $\overline{\psi}$ by 
\begin{equation}
      \epsilon = \left(\begin{array}{cc} 0 & 1 \\ -1 & 0 \end{array} \right) .
\end{equation}
The reason the map $C$ is important is that
 \begin{equation}
\label{eq:C_commutation}
\begin{array}{ccl}
     C(g \psi) = \overline{g} C(\psi) 
\end{array}
\end{equation}
for all $g \in \SU(2)$ and $\psi \in L^2(S^3) \otimes \C^2$.   Checking this is a standard calculation, which we carry out in Appendix \ref{sec:unitary_equivalence}, but conceptually it says that $C$ is an equivalence between the spin-$\hf$ representation of $\SU(2)$ and its conjugate representation.  The desired unitary equivalence between $\rho$ and $\rho_\H$ is then the map 
\begin{equation}
\begin{array}{c}
F \maps L^2(S^3) \otimes \C^2 \to \H \\ [4pt]
    F = I(p_+ + C p_-) 
\end{array}
\end{equation}
where $p_+, p_- \maps L^2(S^3) \otimes \C^2 \to  L^2(S^3) \otimes \C^2$ are the projections of $\psi$ to the $+1$ and $-1$ eigenspaces of $S$, respectively.  We include $I$ here because in the proof of the following theorem we need to keep careful track of the difference between $L^2(S^3) \otimes \C^2$ and $\H$.
 
 \begin{thm} 
 \label{thm:unitary_equivalence_1}
The operator $F \maps L^2(S^3) \otimes \C^2 \to \H$ is a unitary equivalence between the representation $\rho$ of $\SU(2) \times \SU(2)$ on $L^2(S^3) \otimes \C^2$ and the representation $\rho_\H$ of this group on $\H$, and $F \d = \d F$ on the domain of $\d$.
\end{thm}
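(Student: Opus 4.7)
The plan is to verify in turn (i) that $F$ is $\C$-linear as a map into $\H$; (ii) that $F$ is an isometry and a bijection, hence unitary; (iii) that $F\rho(g_1,g_2) = \rho_\H(g_1,g_2)F$ for all $(g_1,g_2) \in \SU(2)\times\SU(2)$; and (iv) that $F\d = \d F$ on the domain of $\d$. The backbone of the argument is a pair of commutation identities for the conjugate-linear operator $C$, namely $C\d = \d C$ (whence $CS = SC$, where $S = \d/|\d|$) and $C\rho(g_1,g_2) = \rho(g_1,g_2)C$. To verify the first, I would write $\d = D + \tfrac{3}{2}$ with $D$ proportional to $X_j \otimes \sigma_j$ and combine two facts: each $X_j$ is a real vector field, so commutes with pointwise complex conjugation; and $\epsilon\,\overline{\sigma_j}\,\epsilon^{-1} = -\sigma_j$ (pseudo-reality of the defining representation of $\SU(2)$). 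The minus sign in the Pauli identity exactly cancels the conjugation of the explicit factor of $i$ in $D$, producing $CD = DC$ and hence $C\d = \d C$. Functional calculus then gives $CS = SC$, so $C$ preserves each eigenspace $\ker(S \mp 1)$.

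Granting this, (i) is a direct calculation. Setting $v_+ := p_+\psi \in \ker(S-1)$ and $v_- := Cp_-\psi \in \ker(S+1)$, the new complex structure $j = iS$ acts as $+i$ on $v_+$ and $-i$ on $v_-$, giving $jF\psi = iv_+ - iv_-$. Conjugate linearity of $C$ yields $F(i\psi) = ip_+\psi + C(ip_-\psi) = iv_+ - iv_-$, matching. For (ii), orthogonality of the two eigenspaces plus $\|C\phi\| = \|\phi\|$ (since $\epsilon$ is unitary and pointwise conjugation is an $L^2$-isometry) gives $\|F\psi\|^2 = \|p_+\psi\|^2 + \|p_-\psi\|^2 = \|\psi\|^2$. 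A short calculation shows $C^2 = -1$, hence $C^{-1} = -C$, so the preimage of $\phi_+ + \phi_- \in \H$ is $\phi_+ - C\phi_-$; thus $F$ is a $\C$-linear bijective isometry, and so unitary.

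For (iii), since $\rho$ and $\rho_\H$ are defined by the same formula on the underlying real vector space, it suffices to show that $p_+ + Cp_-$ commutes with $\rho(g_1,g_2)$. As already noted, $\rho$ commutes with $\d$, hence with the spectral projections $p_\pm$. Unwinding the definition $(\rho(g_1,g_2)\psi)(g) = g_2\,\psi(g_1^{-1}g g_2)$, commutation of $C$ with $\rho(g_1,g_2)$ reduces to the identity $\epsilon\,\overline{g_2} = g_2\,\epsilon$ for $g_2 \in \SU(2)$, which is the pseudo-reality relation recorded in Appendix \ref{sec:unitary_equivalence}. Claim (iv) is then immediate: the spectral projections $p_\pm$ commute with $\d$, and so does $C$, giving $\d(p_+ + Cp_-) = (p_+ + Cp_-)\d$ on the domain of $\d$; and $\d$ has the same action on $L^2(S^3)\otimes\C^2$ as on $\H$, because $\d$ commutes with $S$ and hence with $j = iS$.

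The main obstacle is the careful bookkeeping around the two complex structures in step (i): one must track the interplay between the conjugate linearity of $C$ and the fact that $j$ acts as $-i$ (not $+i$) on $\ker(S+1)$, and confirm that these two sign flips conspire to make $F$ genuinely $\C$-linear. Once this is settled, the remaining verifications are routine spectral calculus and Pauli-matrix manipulations.
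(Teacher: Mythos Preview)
Your proposal is correct and follows essentially the same route as the paper's proof: both hinge on the two key commutation identities $C\d = \d C$ (derived from $\epsilon\,\overline{\sigma_j} = -\sigma_j\,\epsilon$ together with the reality of the vector fields $X_j$) and $C\rho(g_1,g_2) = \rho(g_1,g_2)C$ (derived from $\epsilon\,\overline{g_2} = g_2\,\epsilon$), and then verify linearity, unitarity, intertwining, and $\d$-commutation in turn. One small point where you are actually more accurate than the paper: you correctly compute $C^2 = -1$ (since $\epsilon$ is real and $\epsilon^2 = -1$), whereas the paper says ``$C$ is its own inverse''; fortunately only invertibility of $C$ is needed, so this slip does not affect the argument.
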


\begin{proof}
Since the proof is rather lengthy, we defer it to Appendix \ref{sec:unitary_equivalence}.
\end{proof}

Since changing the complex structure on a Hilbert space can be a bit bewildering, let us
summarize the results.  We have a unitary equivalence between the Hilbert space $L^2(S^3) \otimes \C^2$ of bound states of the hydrogen atom and the Hilbert space $\H$ of solutions of the left-handed Weyl equation on $\R \times S^3$ equipped with a complex structure that makes its Hamiltonian positive. The group $\SU(2) \times \SU(2)$ has equivalent unitary representations on these two Hilbert spaces.  The Dirac operator $\d$  acts on both $L^2(S^3) \otimes \C^2$ and $\H$ in an manner compatible with their unitary equivalence.  Finally, both the hydrogen atom Hamiltonian and the Hamiltonian for the left-handed Weyl equation can be expressed in terms of the Dirac operator: the former is
\[    H = - \frac{1}{2 (\d - \hf)^2}   \]
while the latter is just $|\d|$.

\subsection{Digression on the conformal group}
\label{subsec:conformal}

The 15-dimensional Lie group $\widetilde{\SO}(2,4)$ acts as conformal transformations on the Einstein universe.   This action maps null geodesics to null geodesics, and using this we can define an action of $\widetilde{\SO}(2,4)$ as symplectic transformations of the Kepler manifold $T^+ S^3$.   In fact the Kepler manifold is isomorphic to a coadjoint orbit of this group.    Applying geometric quantization to the Kepler manifold we can obtain an irreducible unitary representation of $\widetilde{\SO}(2,4)$ on the Hilbert space $L^2(S^3)$.   This process is subtle and has been carried out in quite a number of ways \cite{BlattnerWolf, CordaniQuantization, GuilleminSternberg, MladenovTsanov}, perhaps because none is fully satisfying in all respects.    For expository accounts we recommend the books by Cordani \cite{Cordani} and by Guillemin and Sternberg \cite{GuilleminSternberg}.

As mentioned in Section \ref{sec:introduction}, we can think of the Kepler manifold as the classical phase space for a free massless spin-$0$ particle in the Einstein universe.   Thus it is not surprising that upon geometric quantization it should give the Hilbert space of states of a massless spin-0 particle in this spacetime.   Indeed, the complex Hilbert space $L^2(S^3)$, which describes bound states of hydrogen ignoring the electron's spin, also serves to describe initial data for solutions of the conformally invariant real Klein--Gordon equation
\begin{equation}
    \frac{\partial^2 \phi}{\partial t^2} = - (\Delta+1)\phi  
\end{equation}
where $\phi \maps \R \times S^3 \to \R$.   The extra $+1$ is necessary for conformal invariance, and we have also seen it in the Hamiltonian for the hydrogen atom when we ignore the electron's spin, in Equation \eqref{eq:hamiltonian_quantum_2}.   As noted by Barut and Kleinert \cite{BarutKleinert} and others, the representation of $\widetilde{\SO}(2,4)$ on $L^2(S^3)$ also can be used to describe `dynamical symmetries' of bound states of hydrogen with a spinless electron.  We saw how this works for the spatial rotation subgroup $\SU(2) \times \SU(2) \subset \widetilde{\SO}(2,4)$ in Section \ref{sec:spinless}.  

The group $\widetilde{\SO}(2,4)$ also has has two irreducible unitary representations on the space $\H$, which describe right- and left-handed spin-$\hf$ particles on the Einstein universe.   These have been studied by many authors; for example Mack and Todorov \cite{MackTodorov} gave three equivalent constructions of these representations, and Mack \cite{Mack} later gave a complete classification of the irreducible positive-energy unitary representations of  $\widetilde{\SO}(2,4)$, which include all three representations just mentioned.   It would be interesting to obtain the two spin-$\hf$ representations of $\widetilde{\SO}(2,4)$ on $\H$ by geometrically quantizing some phase space(s) for a classical spinning massless particle in the Einstein universe.  However, as mentioned, there is still work left to do on clarifying geometric quantization for the spin-0 representation of $\widetilde{\SO}(2,4)$ on $L^2(S^3)$.

\section{Second quantization}
\label{sec:second}

We now `second quantize' the bound states of the hydrogen atom with spin, or equivalently the left-handed Weyl equation on the Einstein universe.   Physically this means that we construct a Hilbert space for arbitrary finite collections of electrons orbiting the nucleus, or arbitrary finite collections of massless left-handed spin-$\hf$ particles in the Einstein universe, while implementing the Pauli exclusion principle.  

To do this, we first recall how to build the fermionic Fock space on an arbitrary Hilbert space $\h$.    We start with the exterior algebra
\begin{equation}
  \Lambda \h  = \bigoplus_{n = 0}^\infty \Lambda^n \h
\end{equation}
and give it the inner product such that if $e_j$ is any orthonormal basis for $\h$, the wedge products $e_{i_1} \wedge \cdots \wedge e_{i_n}$ with $i_1 < \cdots < i_n$ form an orthonormal basis for $\Lambda^n \h$, and the different subspaces $\Lambda^n \h$ are orthogonal.  Completing $\Lambda \h$ with respect to the norm coming from this inner product, we obtain a Hilbert space we call $\Fock \h$.    

If $\h$ is the Hilbert space for a \emph{single} particle of some sort, $\Lambda^n \h$ is the Hilbert space of states of a collection of $n$ particles of this sort, treated as fermions, and $\Fock \h$ is the Hilbert space for arbitrary finite collections of such fermions.  We call $\Fock \h$ the `fermionic Fock space' on $\h$, and $\Lambda^n \h$ the `$n$-particle subspace'.

To define observables on the Fock space, recall how any self-adjoint operator on $\h$ gives rise to one on $\Fock \h$.   First, any unitary operator $U \maps \h \to \h$ gives rise to a unitary operator $\Fock(U) \maps \Fock \h \to \Fock \h$, determined by the property that
\begin{equation}
   \Fock(U) (\psi_1 \wedge \cdots \wedge \psi_n) = U(\psi_1) \wedge \cdots \wedge U(\psi_n)
\end{equation}
for any vectors $\psi_i \in \h$.  Note that 
\begin{equation}  
 \Fock(UV) = \Fock(U) \Fock(V)  .
\end{equation}

Let $\U(\h)$ be the group of unitary operators on $\h$ , and $\U(\Fock \h)$ the group of unitary operators on  $\Fock(\h)$.    If $G$ is any topological group, any strongly continuous unitary representation $R \maps G \to \U(\h)$ gives rise to a strongly continuous representation $\Fock R \maps G \to \U(\Fock \h)$, defined by
\begin{equation}
(\Fock R)(g) = \Fock (R(g)) .
\end{equation}
In particular, any self-adjoint operator $A$ on $\h$ gives a strongly continuous one-parameter unitary group $\exp(itA)$ on $\h$, and thus a strongly contiuous one-parameter unitary group $\Fock(\exp(itA))$ on $\Fock\h$.  Stone's theorem says the latter is generated by a unique self-adjoint operator on $\Fock\h$, which we call $d\Fock(A)$.  We thus have
\begin{equation}  
\label{eq:fock_exponentiation}
\exp(it d\Fock(A)) = \Fock(\exp(itA)) 
\end{equation}
for all $t \in \R$.  If the vectors $\psi_1, \dots, \psi_n$ are in the domain of $A$, we can differentiate both sides of the above formula applied to $\psi_1 \wedge \cdots \wedge \psi_n$ and set $t = 0$, obtaining
\begin{equation}
    d\Fock(A) (\psi_1 \wedge \cdots \wedge \psi_n) = \sum_{i = 1}^n \psi_1 \wedge \cdots \wedge A \psi_i \wedge \cdots \wedge \psi_n .
 \end{equation}
In particular if all $\psi_i$ are eigenvectors of $A$, then their wedge product is an eigenvector of $d\Fock(A)$:
\begin{equation}
A \psi_i = \lambda_i \psi_i \implies   d \Fock(A)(\psi_1 \wedge \cdots \wedge \psi_n) = (\lambda_1 + \cdots + \lambda_n) (\psi_1 \wedge \cdots \wedge \psi_n). 
\end{equation}

We can apply all this mathematics by taking $\h$ to be the Hilbert space of bound states of a hydrogen atom: 
\begin{equation}
   \h = L^2(S^3) \otimes \C^2
\end{equation}
Then $\Fock \h$ is the Hilbert space for an arbitrary finite collection of electrons occupying such states.     In particular, if $H$ is the hydrogen atom Hamiltonian, then $d\Fock(H)$ restricted to $n$-particle subspace $\Lambda^n \h$ is the Hamiltonian for an idealized atom with $n$ noninteracting electrons.   Since in fact electrons \emph{do} interact, the lowest-energy eigenstate in the $n$-particle space gives a very crude approximation to the $n$th element in the periodic table.  To do better we must modify the Hamiltonian.  We  discuss this in the next section.   

Alternatively we can start with $\H$, the Hilbert space of a single left-handed massless spin-$\hf$ partice in the Einstein universe.   We have seen that the Hamiltonian for such a particle is $|\d|$.  Then $\Fock \H$ is the Hilbert space for an arbitrary collection of left-handed massless spin-$\hf$ particles, treated as fermions.   If these particles are noninteracting, their Hamiltonian is $d\Fock(|\d|)$, and we have a free quantum field theory.
 
A unitary map between Hilbert spaces, for example the unitary operator $F \maps \h \to \H$ of Theorem \ref{thm:unitary_equivalence_1}, induces a unitary operator between their fermionic Fock spaces
\begin{equation}
\begin{array}{c}    \Fock(F) \maps \Fock \h  \to \Fock \H \\  [3pt]
    \Fock(F) (\psi_1 \wedge \cdots \wedge \psi_n) = F \psi_1 \wedge \cdots \wedge F \psi_n 
\end{array}
\end{equation}
for all $\psi_1, \dots, \psi_n \in \h$.    Thus, an equivalence between two theories at the single-particle level induces an equivalence between their second quantized versions.  

\begin{thm}
 \label{thm:unitary_equivalence_2}
The map $\Fock(F) \maps \Fock \h \to \Fock \H$ is a unitary equivalence between the representation $\Fock(\rho)$ of the group $\SU(2) \times \SU(2)$ on the fermionic Fock space  $\Fock \h$ for bound states of the hydrogen atom Hamiltonian and the representation $\Fock(\rho_\H)$ of this group on the fermionic Fock space $\Fock \H$ for left-handed massless spin-$\hf$ particles on the Einstein universe.   That is,
\[      \Fock(F) \Fock(\rho(g_1,g_2)) = \Fock(\rho_\H(g_1,g_2)) \Fock(F)  \]
for all $(g_1,g_2) \in \SU(2) \times \SU(2)$.   Moreover 
\[   \Fock(F) d\Fock(\d) = d\Fock(\d) \Fock(F)  \]
on the domain of $d\Fock(\d)$.
\end{thm}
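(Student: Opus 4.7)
The plan is to derive everything from Theorem \ref{thm:unitary_equivalence_1} by applying the second-quantization functor $\Fock$, which by construction converts unitary operators on $\h$ and $\H$ into unitary operators on $\Fock \h$ and $\Fock \H$ and respects composition, i.e.\ $\Fock(UV) = \Fock(U)\Fock(V)$ as already noted in the excerpt. Unitarity of $\Fock(F)$ itself is immediate from its action on wedge products together with the fact that $F$ is unitary, since if $e_j$ is an orthonormal basis of $\h$ then $F e_j$ is an orthonormal basis of $\H$, so $\Fock(F)$ sends the orthonormal basis of $\Lambda^n \h$ given by wedge products $e_{i_1} \wedge \cdots \wedge e_{i_n}$ with $i_1 < \cdots < i_n$ to the corresponding orthonormal basis of $\Lambda^n \H$.

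For the intertwining of the group representations, Theorem \ref{thm:unitary_equivalence_1} gives the operator equation $F \rho(g_1,g_2) = \rho_\H(g_1,g_2) F$ for all $(g_1,g_2) \in \SU(2) \times \SU(2)$. Since all three factors are unitary, I can apply $\Fock$ to both sides and use functoriality to conclude
\begin{equation*}
\Fock(F) \Fock(\rho(g_1,g_2)) = \Fock(F \rho(g_1,g_2)) = \Fock(\rho_\H(g_1,g_2) F) = \Fock(\rho_\H(g_1,g_2)) \Fock(F),
\end{equation*}
which is precisely the desired intertwining identity.

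For the Hamiltonian statement, the idea is to pass to one-parameter unitary groups so that everything becomes a statement about unitary operators, to which the previous argument applies, and then differentiate. The identity $F \d = \d F$ on the domain of $\d$ is equivalent, by Stone's theorem, to $F \exp(it\d) = \exp(it\d) F$ for all $t \in \R$. Applying $\Fock$ and using functoriality together with \eqref{eq:fock_exponentiation}, I get
\begin{equation*}
\Fock(F) \exp(it\, d\Fock(\d)) = \exp(it\, d\Fock(\d)) \Fock(F)
\end{equation*}
for all $t \in \R$. Differentiating in $t$ at $t=0$ on vectors in the domain of $d\Fock(\d)$ yields $\Fock(F) d\Fock(\d) = d\Fock(\d) \Fock(F)$ on that domain.

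The main obstacle, though essentially cosmetic, is the usual care required for unbounded self-adjoint operators: I should justify the differentiation by noting that $\Fock(F)$ is bounded (unitary) and conjugating a strongly continuous one-parameter unitary group by a fixed unitary yields a strongly continuous one-parameter unitary group with the same domain of generator. Stone's theorem then identifies the generators and gives the intertwining on the natural domain without any further analytic subtlety.
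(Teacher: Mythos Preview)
Your proof is correct and follows essentially the same approach as the paper: both use functoriality of $\Fock$ together with Theorem~\ref{thm:unitary_equivalence_1} for the intertwining of representations, and for the Dirac operator both pass to one-parameter unitary groups via Stone's theorem, apply \eqref{eq:fock_exponentiation}, and differentiate at $t=0$. Your final paragraph on the analytic subtlety is a nice addition that the paper leaves implicit.
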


\begin{proof}
We defer the proof to Appendix \ref{sec:unitary_equivalence}.
\end{proof}

\section{Multi-electron atoms --- the Madelung rules}
\label{sec:aufbau}

Having set up an isomorphism between the Hilbert space for collections of electrons orbiting a nucleus and the Hilbert space of a massless spin-$\hf$ quantum field on the Einstein universe, we can now try to exploit it.   In what follows, we describe a Hamiltonian for a spin-$\hf$ quantum field on the Einstein universe whose lowest-energy $N$-particle state roughly corresponds to the ground state of an $N$-electron atom.   This is not to be taken seriously as atomic physics, for various reasons that we shall explain.     It is just an amusing mathematical exercise: to exhibit a simple quantum field theory whose structure loosely mimics that of the periodic table.

We have seen that in the nonrelativistic limit, the hydrogen atom with spin has the Hilbert space
of bound states $\h = L^2(S^3) \otimes \C^2$.  In atomic physics,  the eigenspaces of the Hamiltonian $H$ on $\h$ are called `shells', while the joint eigenspaces of the operators $H$ and the angular momentum squared, $L^2$, are called `subshells'.   We denote the shells as 
\begin{equation}
  \h_n = \{ \psi \in \h: \; H\psi =  -\frac{1}{2n^2} \psi \} 
\end{equation}
and the subshells as
\begin{equation}
    \h_{n,\ell} = \{ \psi \in \h_n : \;  L^2 \psi = \ell(\ell + 1) \psi \} .
\end{equation}
The shells are direct sums of subshells as follows:
\begin{equation}
   \h_n = \bigoplus_{\ell = 0}^{n-1} \h_{n,\ell}  
 \end{equation}
and the direct sum of all the shells is $\h$:
\begin{equation}
    \h = \bigoplus_{n = 1}^\infty  \h_n  .
 \end{equation}
The dimension of the subshell $\h_{n,\ell}$ is $2(2\ell + 1)$, so the dimension of the shell $\h_n$ is
\[    2(1 + 3 + 5 + \cdots + (2n - 1)) = 2n^2  .\]

The `Aufbau principle' is an approximate way to describe the ground state of an $N$-electron atom as a state $\phi$ in the $N$-particle subspace of the Fock space $\Fock \h$.   To do this, we choose a Hamiltonian $H_{\text{Fock}}$ on $\Fock \h$ and decree that $\phi$ must minimize the expected energy $\langle \phi , H_{\text{Fock}}\, \phi \rangle$ among all unit vectors
in the $N$-particle subspace.   However, we choose the Hamiltonian $H_{\text{Fock}}$ in a very simplistic way.  We ignore the details of electron-electron interactions.  Instead, we simply assign an energy $E_{n,\ell}$ to each subshell, let $H_{\text{single}}$ be the unique Hamiltonian on $\h$ such that 
\begin{equation}
      \psi \in \h_{n,\ell} \implies   H_{\text{single}} \psi = E_{n,\ell} \,\psi  ,
\end{equation}
and then we let 
\begin{equation}     
    H_{\text{Fock}} = d\Fock(H_{\text{single}})  .
\end{equation}
Thus, $H_{\text{Fock}}$ has a basis of eigenvectors that are wedge products of single-particle states lying in various subshells.    Explicitly, if $\phi = \psi_1 \wedge \cdots \wedge \psi_N$ where $\psi_i \in \h_{n_i, \ell_i}$, we have
\begin{equation}
  H_{\text{Fock}} \, \phi = (E_{n_1,\ell_1} + \cdots + E_{n_N, \ell_N}) \phi  .
\end{equation}
Thus, we can minimize $\langle \phi , H_{\text{Fock}}\, \phi \rangle$ among unit vectors
in the $N$-particle subspace by choosing $N$ distinct basis vectors $\psi_i = |n_i, \ell_i, m_i, s_i \rangle $ in a way that minimizes the total energy $E_{n_1,\ell_1} + \cdots + E_{n_N, \ell_N}$.

If we follow this recipe taking $H_{\text{single}}$ to be the hydrogen atom Hamiltonian $H$, 
we get results that do not closely match the observed periodic table of elements.  With this choice, 
 $E_{n,\ell} = -1/2n^2$, which depends only on the shell, not the subshell.   Thus, this choice makes no prediction about the order in which subshells are filled.

For the recipe to give results that more closely match the periodic table, we need to choose the energies $E_{n,\ell}$ in a more clever way.   In 1936, Madelung \cite{Madelung} argued for these rules:
\begin{itemize}
\item subshells are filled in order of increasing value of $n + \ell$;
\item for subshells with the same value of $n + \ell$, subshells are filled in order of decreasing $\ell$ (or equivalently, increasing $n$). 
\end{itemize} 
The pattern of subshell filling is then shown in Figure \ref{figure_1}.   

\begin{figure}[h!]
\includegraphics[width = 27em]{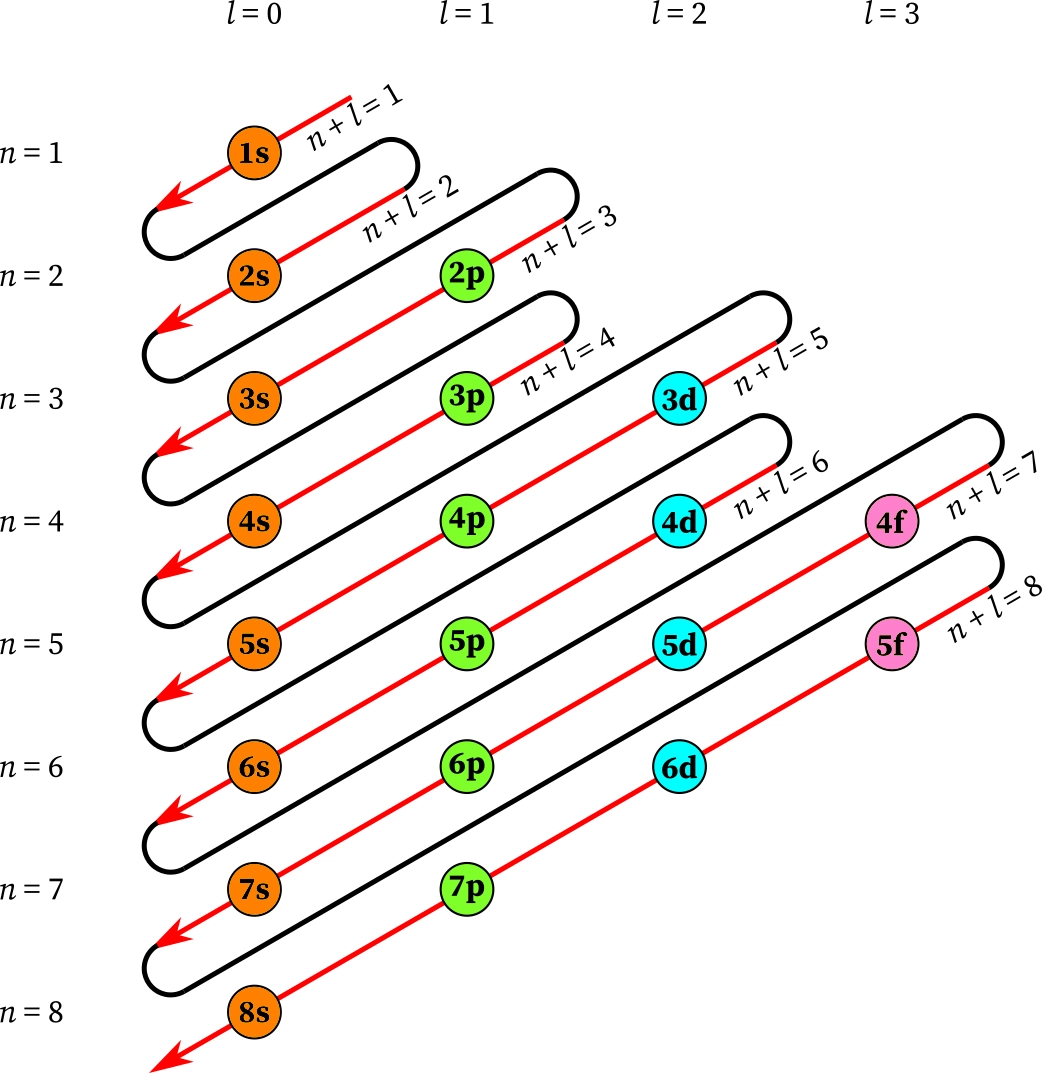} 
\vskip 1em
\caption{The Madelung rules.}
\label{figure_1}
\end{figure}

In 1945, Wiswesser \cite{Wiswesser} noted that the Madelung rules follow from the recipe we outlined if we choose $E_{n,\ell} = n + \ell - \frac{\ell}{\ell + 1}$.
There are many other functions of $n$ and $\ell$ that achieve the same effect.  For example, we can also obtain the Madelung rules if we take
\begin{equation}       
\label{eq:interaction_hamiltonian}
 E_{n,\ell} = 2n + (2\ell + 1) + (2\ell + 1)^{-1} ,
\end{equation}
and this formula is more convenient for us.   In Figure \ref{figure_2} and Table \ref{table_1} we show how the Madelung rules and this formula play out for the first 120 elements.

The Madelung rules do not always hold.   The first exception is element 24, chromium. The Madelung rules predict that chromium has 2 electrons in the $4s$ subshell and 4 electrons in the $3d$, while in fact it has 1 in the $4s$ and 5 in the $3d$.   There are eleven exceptions to the Madelung rules in the so-called $d$-block elements (shown in purple in Figure \ref{figure_2}), and nine exceptions in the $f$-block elements (shown in pink).   Nonetheless the general structure of the periodic table is in reasonably good accord with the Madelung rules for all the elements studied chemically so far, though relativistic effects may end this for very heavy elements.

Thus it is of some interest, if only as a curiosity, to define a Hamiltonian on the Hilbert space $\h$ that takes the eigenvalue $E_{n,\ell}$ in the subshell $\h_{n,\ell}$.  These energies are not at all close to the actual energies of the various multi-electron atoms, and any monotone function of $E_{n,\ell}$ would also give the Madelung rules---but this particular Hamiltonian is fairly simple.  More realistic single-particle Hamiltonians for electrons in atoms are known \cite{Latter}, but they are not connected to the mathematics we are studying here.

Recall from Equation \eqref{eq:hydrogen} that
\[
  \begin{array}{cclll}
    A^2 |n , \ell, m,  s \rangle  &=& j(j+1) |n , \ell, m, s \rangle &=& \fth(n^2 - 1) |n , \ell, m, s \rangle \\ [3pt]
     L^2 |n , \ell, m, s \rangle &=& \ell(\ell + 1) |n , \ell, m, s \rangle.
     \end{array}
\]
where $n = 2j + 1$ as usual.  The Duflo isomorphism, as discussed in Section \ref{subsec:duflo}, makes it natural to define operators
\begin{equation}
   \tilde{A}^2 = A^2 + \fth, \qquad \tilde{L}^2 = L^2 + \fth .
\end{equation}
If we then define $\tilde{A}$ and $\tilde{L}$ to be the square roots of these operators, we have
\begin{equation}
  \begin{array}{cclcl}
    \tilde{A} |n , \ell, m, s \rangle  &=& \hf n |n , \ell, m, s \rangle    \\ [3pt]
     \tilde{L} |n , \ell, m, s \rangle &=& (\ell + \hf) |n , \ell, m, s \rangle
     \end{array}
\end{equation}
and thus by Equation \eqref{eq:interaction_hamiltonian} 
\begin{equation}
   (2\tilde{A} + 2\tilde{L} + (2\tilde{L})^{-1})|n , \ell, m, s \rangle =
 E_{n,\ell} |n , \ell, m,  s\rangle .
 \end{equation}
 
 This suggests taking our single-particle Hamiltonian to be
 \begin{equation}
 H_{\text{single}} = 2\tilde{A} + 2\tilde{L} + (2\tilde{L})^{-1}.
\end{equation}
If we then define a Hamiltonian on the Fock space $\Fock \h$ by
\begin{equation}
   H_{\text{Fock}} = d\Fock(H_{\text{single}}) 
\end{equation}
and create an orthonormal basis $\psi_i$ of eigenvectors of $H_{\text{Fock}}$, listed in order of increasing eigenvalue, these eigenvectors correspond to the elements with subshells filled as predicted by the Madelung rules.  The one exception is the state $\psi_0$ with no electrons, sometimes called `element zero' and identified with the neutron.  For example:
\[
\begin{array}{ccll} 
\psi_1 &=& |1,0,0,\hf \rangle & \text{hydrogen} \\ \\
\psi_2 &=& |1,0,0,\hf \rangle \wedge |1,0,0,-\hf\rangle & \text{helium} \\ \\
\psi_3 &=& |1,0,0,\hf \rangle \wedge |1,0,0,-\hf\rangle 
\wedge |2,0,0,\hf\rangle & \text{lithium} \\ \\
\psi_4 &=& |1,0,0,\hf \rangle \wedge |1,0,0,-\hf\rangle 
\wedge |2,0,0,\hf\rangle \wedge |2,0,0,-\hf\rangle & \text{beryllium} \\ \\
\psi_5 &=& |1,0,0,\hf \rangle \wedge |1,0,0,-\hf\rangle 
\wedge |2,0,0,\hf\rangle \wedge |2,0,0,-\hf\rangle 
\wedge |2,1,1,\hf\rangle & \text{boron}
\end{array}
\]
Here the assignments of magnetic quantum numbers $m$ and spins $s$ are not determined by the rules we have laid out.   These are governed, at least approximately, by `Hund's rules': 
\begin{itemize}
\item  every $m$ state in a subshell is singly occupied before any is doubly occupied;
\item  all of the electrons in singly occupied orbitals have the same spin.
\end{itemize}
We could go further and attempt to choose a simple Hamiltonian for which the principle of energy mimization also gives Hund's rules.   However, we prefer to stop here, leaving the reader with the challenge of finding more substantial connections between atomic physics and quantum field theory on the Einstein universe.

\vskip 1em
\begin{figure}[h!]
\newcommand{\CommonElementTextFormat}[4]
{
  \begin{minipage}{2cm}
    \centering
      {\textbf{#1}  #2}%
      \linebreak \linebreak
      {\textbf{#3}}%
      \linebreak \linebreak
      {{#4}}
  \end{minipage}
}

\newcommand{\NaturalElementTextFormat}[4]{\CommonElementTextFormat{#1}{}{\LARGE {#3}}{#4}}

\begin{tikzpicture}[font=\sffamily, scale=0.3, transform shape]

%% Fill Color Styles
  \tikzstyle{ElementFill} = [fill=yellow!15]
  \tikzstyle{AlkaliMetalFill} = [fill=orange]
  \tikzstyle{MetalFill} = [fill=lightblue]
  \tikzstyle{NonmetalFill} = [fill=green]
%  \tikzstyle{HalogenFill} = [fill=green!40]
%  \tikzstyle{NobleGasFill} = [fill=green!55]
  \tikzstyle{LanthanideActinideFill} = [fill=pink]

%% Element Styles
  \tikzstyle{Element} = [draw=black, ElementFill,
    minimum width=2.75cm, minimum height=2.75cm, node distance=2.75cm]
  \tikzstyle{AlkaliMetal} = [Element, AlkaliMetalFill]
  \tikzstyle{AlkalineEarthMetal} = [Element, AlkalineEarthMetalFill]
  \tikzstyle{Metal} = [Element, MetalFill]
  \tikzstyle{Metalloid} = [Element, MetalloidFill]
  \tikzstyle{Nonmetal} = [Element, NonmetalFill]
  \tikzstyle{Halogen} = [Element, HalogenFill]
  \tikzstyle{NobleGas} = [Element, NobleGasFill]
  \tikzstyle{LanthanideActinide} = [Element, LanthanideActinideFill]
  \tikzstyle{PeriodLabel} = [font={\sffamily\LARGE}, node distance=2.8cm]
  \tikzstyle{GroupLabel} = [font={\sffamily\LARGE}, minimum width=2.75cm, node distance=2.0cm]
  \tikzstyle{TitleLabel} = [font={\sffamily\Huge\bfseries}]

%% Group 1 - IA
  \node[name=H, AlkaliMetal] {\NaturalElementTextFormat{1}{1.0079}{H}{Hydrogen}};
  \node[name=Li, below of=H, AlkaliMetal] {\NaturalElementTextFormat{3}{6.941}{Li}{Lithium}};
  \node[name=Na, below of=Li, AlkaliMetal] {\NaturalElementTextFormat{11}{22.990}{Na}{Sodium}};
  \node[name=K, below of=Na, AlkaliMetal] {\NaturalElementTextFormat{19}{39.098}{K}{Potassium}};
  \node[name=Rb, below of=K, AlkaliMetal] {\NaturalElementTextFormat{37}{85.468}{Rb}{Rubidium}};
  \node[name=Cs, below of=Rb, AlkaliMetal] {\NaturalElementTextFormat{55}{132.91}{Cs}{Caesium}};
  \node[name=Fr, below of=Cs, AlkaliMetal] {\NaturalElementTextFormat{87}{223}{Fr}{Francium}};

%% Group 2 - IIA
  \node[name=Be, right of=Li, AlkaliMetal] {\NaturalElementTextFormat{4}{9.0122}{Be}{Beryllium}};
    \node[name=He, above of=Be, AlkaliMetal] {\NaturalElementTextFormat{2}{4.0025}{He}{Helium}};
  \node[name=Mg, below of=Be, AlkaliMetal] {\NaturalElementTextFormat{12}{24.305}{Mg}{Magnesium}};
  \node[name=Ca, below of=Mg, AlkaliMetal] {\NaturalElementTextFormat{20}{40.078}{Ca}{Calcium}};
  \node[name=Sr, below of=Ca, AlkaliMetal] {\NaturalElementTextFormat{38}{87.62}{Sr}{Strontium}};
  \node[name=Ba, below of=Sr, AlkaliMetal] {\NaturalElementTextFormat{56}{137.33}{Ba}{Barium}};
  \node[name=Ra, below of=Ba, AlkaliMetal] {\NaturalElementTextFormat{88}{226}{Ra}{Radium}};

%% Group 3 - IIIB
  \node[name=Sc, right of=Ca, Metal] {\NaturalElementTextFormat{21}{44.956}{Sc}{Scandium}};
  \node[name=Y, below of=Sc, Metal] {\NaturalElementTextFormat{39}{88.906}{Y}{Yttrium}};
  \node[name=LaLu, below of=Y, Metal] {\NaturalElementTextFormat{71}{}{Lu}{Lutetium}};
  \node[name=AcLr, below of=LaLu, Metal] {\NaturalElementTextFormat{103}{}{Lw}{Lawrencium}};

%% Group 4 - IVB
  \node[name=Ti, right of=Sc, Metal] {\NaturalElementTextFormat{22}{47.867}{Ti}{Titanium}};
  \node[name=Zr, below of=Ti, Metal] {\NaturalElementTextFormat{40}{91.224}{Zr}{Zirconium}};
  \node[name=Hf, below of=Zr, Metal] {\NaturalElementTextFormat{72}{178.49}{Hf}{Halfnium}};
  \node[name=Rf, below of=Hf, Metal] {\NaturalElementTextFormat{104}{261}{Rf}{Rutherfordium}};

%% Group 5 - VB
  \node[name=V, right of=Ti, Metal] {\NaturalElementTextFormat{23}{50.942}{V}{Vanadium}};
  \node[name=Nb, below of=V, Metal] {\NaturalElementTextFormat{41}{92.906}{Nb}{Niobium}};
  \node[name=Ta, below of=Nb, Metal] {\NaturalElementTextFormat{73}{180.95}{Ta}{Tantalum}};
  \node[name=Db, below of=Ta, Metal] {\NaturalElementTextFormat{105}{262}{Db}{Dubnium}};

%% Group 6 - VIB
  \node[name=Cr, right of=V, Metal] {\NaturalElementTextFormat{24}{51.996}{Cr}{Chromium}};
  \node[name=Mo, below of=Cr, Metal] {\NaturalElementTextFormat{42}{95.94}{Mo}{Molybdenum}};
  \node[name=W, below of=Mo, Metal] {\NaturalElementTextFormat{74}{183.84}{W}{Tungsten}};
  \node[name=Sg, below of=W, Metal] {\NaturalElementTextFormat{106}{266}{Sg}{Seaborgium}};

%% Group 7 - VIIB
  \node[name=Mn, right of=Cr, Metal] {\NaturalElementTextFormat{25}{54.938}{Mn}{Manganese}};
  \node[name=Tc, below of=Mn, Metal] {\NaturalElementTextFormat{43}{96}{Tc}{Technetium}};
  \node[name=Re, below of=Tc, Metal] {\NaturalElementTextFormat{75}{186.21}{Re}{Rhenium}};
  \node[name=Bh, below of=Re, Metal] {\NaturalElementTextFormat{107}{264}{Bh}{Bohrium}};

%% Group 8 - VIIIB
  \node[name=Fe, right of=Mn, Metal] {\NaturalElementTextFormat{26}{55.845}{Fe}{Iron}};
  \node[name=Ru, below of=Fe, Metal] {\NaturalElementTextFormat{44}{101.07}{Ru}{Ruthenium}};
  \node[name=Os, below of=Ru, Metal] {\NaturalElementTextFormat{76}{190.23}{Os}{Osmium}};
  \node[name=Hs, below of=Os, Metal] {\NaturalElementTextFormat{108}{277}{Hs}{Hassium}};

%% Group 9 - VIIIB
  \node[name=Co, right of=Fe, Metal] {\NaturalElementTextFormat{27}{58.933}{Co}{Cobalt}};
  \node[name=Rh, below of=Co, Metal] {\NaturalElementTextFormat{45}{102.91}{Rh}{Rhodium}};
  \node[name=Ir, below of=Rh, Metal] {\NaturalElementTextFormat{77}{192.22}{Ir}{Iridium}};
  \node[name=Mt, below of=Ir, Metal] {\NaturalElementTextFormat{109}{268}{Mt}{Meitnerium}};

%% Group 10 - VIIIB
  \node[name=Ni, right of=Co, Metal] {\NaturalElementTextFormat{28}{58.693}{Ni}{Nickel}};
  \node[name=Pd, below of=Ni, Metal] {\NaturalElementTextFormat{46}{106.42}{Pd}{Palladium}};
  \node[name=Pt, below of=Pd, Metal] {\NaturalElementTextFormat{78}{195.08}{Pt}{Platinum}};
  \node[name=Ds, below of=Pt, Metal] {\NaturalElementTextFormat{110}{281}{Ds}{Darmstadtium}};

%% Group 11 - IB
  \node[name=Cu, right of=Ni, Metal] {\NaturalElementTextFormat{29}{63.546}{Cu}{Copper}};
  \node[name=Ag, below of=Cu, Metal] {\NaturalElementTextFormat{47}{107.87}{Ag}{Silver}};
  \node[name=Au, below of=Ag, Metal] {\NaturalElementTextFormat{79}{196.97}{Au}{Gold}};
  \node[name=Rg, below of=Au, Metal] {\NaturalElementTextFormat{111}{280}{Rg}{Roentgenium}};

%% Group 12 - IIB
  \node[name=Zn, right of=Cu, Metal] {\NaturalElementTextFormat{30}{65.39}{Zn}{Zinc}};
  \node[name=Cd, below of=Zn, Metal] {\NaturalElementTextFormat{48}{112.41}{Cd}{Cadmium}};
  \node[name=Hg, below of=Cd, Metal] {\NaturalElementTextFormat{80}{200.59}{Hg}{Mercury}};
  \node[name=Cn, below of=Hg, Metal] {\NaturalElementTextFormat{112}{285}{Cn}{Copernicium}};

%% Group 13 - IIIA
  \node[name=Ga, right of=Zn, Nonmetal] {\NaturalElementTextFormat{31}{69.723}{Ga}{Gallium}};
  \node[name=Al, above of=Ga, Nonmetal] {\NaturalElementTextFormat{13}{26.982}{Al}{Aluminium}};
  \node[name=B, above of=Al, Nonmetal] {\NaturalElementTextFormat{5}{10.811}{B}{Boron}};
  \node[name=In, below of=Ga, Nonmetal] {\NaturalElementTextFormat{49}{114.82}{In}{Indium}};
  \node[name=Tl, below of=In, Nonmetal] {\NaturalElementTextFormat{81}{204.38}{Tl}{Thallium}};
  \node[name=Nh, below of=Tl, Nonmetal] {\NaturalElementTextFormat{113}{284}{Nh}{Nihonium}};

%% Group 14 - IVA
  \node[name=C, right of=B, Nonmetal] {\NaturalElementTextFormat{6}{12.011}{C}{Carbon}};
  \node[name=Si, below of=C, Nonmetal] {\NaturalElementTextFormat{14}{28.086}{Si}{Silicon}};
  \node[name=Ge, below of=Si, Nonmetal] {\NaturalElementTextFormat{32}{72.64}{Ge}{Germanium}};
  \node[name=Sn, below of=Ge, Nonmetal] {\NaturalElementTextFormat{50}{118.71}{Sn}{Tin}};
  \node[name=Pb, below of=Sn, Nonmetal] {\NaturalElementTextFormat{82}{207.2}{Pb}{Lead}};
  \node[name=Fl, below of=Pb, Nonmetal] {\NaturalElementTextFormat{114}{289}{Fl}{Flerovium}};

%% Group 15 - VA
  \node[name=N, right of=C, Nonmetal] {\NaturalElementTextFormat{7}{14.007}{N}{Nitrogen}};
  \node[name=P, below of=N, Nonmetal] {\NaturalElementTextFormat{15}{30.974}{P}{Phosphorus}};
  \node[name=As, below of=P, Nonmetal] {\NaturalElementTextFormat{33}{74.922}{As}{Arsenic}};
  \node[name=Sb, below of=As, Nonmetal] {\NaturalElementTextFormat{51}{121.76}{Sb}{Antimony}};
  \node[name=Bi, below of=Sb, Nonmetal] {\NaturalElementTextFormat{83}{208.98}{Bi}{Bismuth}};
  \node[name=Mc, below of=Bi, Nonmetal] {\NaturalElementTextFormat{115}{288}{Mc}{Moscovium}};

%% Group 16 - VIA
  \node[name=O, right of=N, Nonmetal] {\NaturalElementTextFormat{8}{15.999}{O}{Oxygen}};
  \node[name=S, below of=O, Nonmetal] {\NaturalElementTextFormat{16}{32.065}{S}{Sulphur}};
  \node[name=Se, below of=S, Nonmetal] {\NaturalElementTextFormat{34}{78.96}{Se}{Selenium}};
  \node[name=Te, below of=Se, Nonmetal] {\NaturalElementTextFormat{52}{127.6}{Te}{Tellurium}};
  \node[name=Po, below of=Te, Nonmetal] {\NaturalElementTextFormat{84}{209}{Po}{Polonium}};
  \node[name=Lv, below of=Po, Nonmetal] {\NaturalElementTextFormat{116}{293}{Lv}{Livermorium}};

%% Group 17 - VIIA
  \node[name=F, right of=O, Nonmetal] {\NaturalElementTextFormat{9}{18.998}{F}{Fluorine}};
  \node[name=Cl, below of=F, Nonmetal] {\NaturalElementTextFormat{17}{35.453}{Cl}{Chlorine}};
  \node[name=Br, below of=Cl, Nonmetal] {\NaturalElementTextFormat{35}{79.904}{Br}{Bromine}};
  \node[name=I, below of=Br, Nonmetal] {\NaturalElementTextFormat{53}{126.9}{I}{Iodine}};
  \node[name=At, below of=I, Nonmetal] {\NaturalElementTextFormat{85}{210}{At}{Astatine}};
  \node[name=Ts, below of=At, Nonmetal] {\NaturalElementTextFormat{117}{292}{Ts}{Tennessine}}; 

%% Group 18 - VIIIA
  \node[name=Ne, right of=F, Nonmetal] {\NaturalElementTextFormat{10}{20.180}{Ne}{Neon}};
  \node[name=Ar, below of=Ne, Nonmetal] {\NaturalElementTextFormat{18}{39.948}{Ar}{Argon}};
  \node[name=Kr, below of=Ar, Nonmetal] {\NaturalElementTextFormat{36}{83.8}{Kr}{Krypton}};
  \node[name=Xe, below of=Kr, Nonmetal] {\NaturalElementTextFormat{54}{131.29}{Xe}{Xenon}};
  \node[name=Rn, below of=Xe, Nonmetal] {\NaturalElementTextFormat{86}{222}{Rn}{Radon}};
  \node[name=Og, below of=Rn, Nonmetal] {\NaturalElementTextFormat{118}{294}{Og}{Oganesson}}; 

%% Period
  \node[name=Period1, left of=H, PeriodLabel] {\Huge {\boldmath $n = 1$}};
  \node[name=Period2, left of=Li, PeriodLabel] {\Huge {\boldmath $n = 2$}};
  \node[name=Period3, left of=Na, PeriodLabel] {\Huge {\boldmath $n = 3$}}; 
  \node[name=Period4, left of=K, PeriodLabel]  {\Huge {\boldmath $n = 4$}}; 
  \node[name=Period5, left of=Rb, PeriodLabel] {\Huge {\boldmath $n = 5$}};
  \node[name=Period6, left of=Cs, PeriodLabel] {\Huge {\boldmath $n = 6$}};
  \node[name=Period7, left of=Fr, PeriodLabel] {\Huge {\boldmath $n = 7$}};

\iffalse
%% Group
  \node[name=Group1, above of=H, GroupLabel] {1 \hfill IA};
  \node[name=Group2, above of=Be, GroupLabel] {2 \hfill IIA};
  \node[name=Group3, above of=Sc, GroupLabel] {3 \hfill IIIA};
  \node[name=Group4, above of=Ti, GroupLabel] {4 \hfill IVB};
  \node[name=Group5, above of=V, GroupLabel] {5 \hfill VB};
  \node[name=Group6, above of=Cr, GroupLabel] {6 \hfill VIB};
  \node[name=Group7, above of=Mn, GroupLabel] {7 \hfill VIIB};
  \node[name=Group8, above of=Fe, GroupLabel] {8 \hfill VIIIB};
  \node[name=Group9, above of=Co, GroupLabel] {9 \hfill VIIIB};
  \node[name=Group10, above of=Ni, GroupLabel] {10 \hfill VIIIB};
  \node[name=Group11, above of=Cu, GroupLabel] {11 \hfill IB};
  \node[name=Group12, above of=Zn, GroupLabel] {12 \hfill IIB};
  \node[name=Group13, above of=B, GroupLabel] {13 \hfill IIIA};
  \node[name=Group14, above of=C, GroupLabel] {14 \hfill IVA};
  \node[name=Group15, above of=N, GroupLabel] {15 \hfill VA};
  \node[name=Group16, above of=O, GroupLabel] {16 \hfill VIA};
  \node[name=Group17, above of=F, GroupLabel] {17 \hfill VIIA};
  \node[name=Group18, above of=He, GroupLabel] {18 \hfill VIIIA};
\fi

%% Lanthanide
  \node[name=La, below of=Rf, LanthanideActinide, yshift=-1cm] {\NaturalElementTextFormat{57}{138.91}{La}{Lanthanum}};
  \node[name=Ce, right of=La, LanthanideActinide] {\NaturalElementTextFormat{58}{140.12}{Ce}{Cerium}};
  \node[name=Pr, right of=Ce, LanthanideActinide] {\NaturalElementTextFormat{59}{140.91}{Pr}{Praseodymium}};
  \node[name=Nd, right of=Pr, LanthanideActinide] {\NaturalElementTextFormat{60}{144.24}{Nd}{Neodymium}};
  \node[name=Pm, right of=Nd, LanthanideActinide] {\NaturalElementTextFormat{61}{145}{Pm}{Promethium}};
  \node[name=Sm, right of=Pm, LanthanideActinide] {\NaturalElementTextFormat{62}{150.36}{Sm}{Samarium}};
  \node[name=Eu, right of=Sm, LanthanideActinide] {\NaturalElementTextFormat{63}{151.96}{Eu}{Europium}};
  \node[name=Gd, right of=Eu, LanthanideActinide] {\NaturalElementTextFormat{64}{157.25}{Gd}{Gadolinium}};
  \node[name=Tb, right of=Gd, LanthanideActinide] {\NaturalElementTextFormat{65}{158.93}{Tb}{Terbium}};
  \node[name=Dy, right of=Tb, LanthanideActinide] {\NaturalElementTextFormat{66}{162.50}{Dy}{Dysprosium}};
  \node[name=Ho, right of=Dy, LanthanideActinide] {\NaturalElementTextFormat{67}{164.93}{Ho}{Holmium}};
  \node[name=Er, right of=Ho, LanthanideActinide] {\NaturalElementTextFormat{68}{167.26}{Er}{Erbium}};
  \node[name=Tm, right of=Er, LanthanideActinide] {\NaturalElementTextFormat{69}{168.93}{Tm}{Thulium}};
  \node[name=Yb, right of=Tm, LanthanideActinide] {\NaturalElementTextFormat{70}{173.04}{Yb}{Ytterbium}};
%  \node[name=Lu, right of=Yb, LanthanideActinide] {\NaturalElementTextFormat{71}{174.97}{Lu}{Lutetium}};

%% Actinide
  \node[name=Ac, below of=La, LanthanideActinide, yshift=-1cm] {\NaturalElementTextFormat{89}{227}{Ac}{Actinium}};
  \node[name=Th, right of=Ac, LanthanideActinide] {\NaturalElementTextFormat{90}{232.04}{Th}{Thorium}};
  \node[name=Pa, right of=Th, LanthanideActinide] {\NaturalElementTextFormat{91}{231.04}{Pa}{Protactinium}};
  \node[name=U, right of=Pa, LanthanideActinide] {\NaturalElementTextFormat{92}{238.03}{U}{Uranium}};
  \node[name=Np, right of=U, LanthanideActinide] {\NaturalElementTextFormat{93}{237}{Np}{Neptunium}};
  \node[name=Pu, right of=Np, LanthanideActinide] {\NaturalElementTextFormat{94}{244}{Pu}{Plutonium}};
  \node[name=Am, right of=Pu, LanthanideActinide] {\NaturalElementTextFormat{95}{243}{Am}{Americium}};
  \node[name=Cm, right of=Am, LanthanideActinide] {\NaturalElementTextFormat{96}{247}{Cm}{Curium}};
  \node[name=Bk, right of=Cm, LanthanideActinide] {\NaturalElementTextFormat{97}{247}{Bk}{Berkelium}};
  \node[name=Cf, right of=Bk, LanthanideActinide] {\NaturalElementTextFormat{98}{251}{Cf}{Californium}};
  \node[name=Es, right of=Cf, LanthanideActinide] {\NaturalElementTextFormat{99}{252}{Es}{Einsteinium}};
  \node[name=Fm, right of=Es, LanthanideActinide] {\NaturalElementTextFormat{100}{257}{Fm}{Fermium}};
  \node[name=Md, right of=Fm, LanthanideActinide] {\NaturalElementTextFormat{101}{258}{Md}{Mendelevium}};
  \node[name=No, right of=Md, LanthanideActinide] {\NaturalElementTextFormat{102}{259}{No}{Nobelium}};
%  \node[name=Lr, right of=No, LanthanideActinide] {\NaturalElementTextFormat{103}{262}{Lr}{Lawrencium}};

%% Draw dotted lines connecting Lanthanide breakout to main table
  \draw (LaLu.north west) edge[dotted] (La.north west)
        (LaLu.north east) edge[dotted] (Yb.north east)
        (LaLu.south west) edge[dotted] (La.south west)
        (LaLu.south east) edge[dotted] (Yb.south east);
%% Draw dotted lines connecting Actinide breakout to main table
  \draw (AcLr.north west) edge[dotted] (Ac.north west)
        (AcLr.north east) edge[dotted] (No.north east)
        (AcLr.south west) edge[dotted] (Ac.south west)
        (AcLr.south east) edge[dotted] (No.south east);

%% Legend
  \draw[black, AlkaliMetalFill] ($(La.north -| Fr.west) + (35em,45.0em)$)
    rectangle +(2em, 2em) node[right, yshift=-2.5ex, xshift=2ex]{\Huge {\boldmath \bf $s$ subshells: $\ell = 0$}};

  \draw[black, MetalFill] ($(La.north -| Fr.west) + (35em,41.0em)$)
    rectangle +(2em, 2em) node[right, yshift=-2.5ex, xshift=2ex]{\Huge {\boldmath \bf $d$ subshells: $\ell = 1$}};

  \draw[black, NonmetalFill] ($(La.north -| Fr.west) + (35em,37.0em)$)
    rectangle +(2em, 2em) node[right, yshift=-2.5ex, xshift=2ex]{\Huge {\boldmath \bf $p$ subshells: $\ell = 2$}};

  \draw[black, LanthanideActinideFill] ($(La.north -| Fr.west) + (35em,33.0em)$)
    rectangle +(2em, 2em) node[right, yshift=-2.5ex, xshift=2ex]{\Huge {\boldmath \bf $f$ subshells: $\ell = 3$}};

\iffalse
  \node at ($(La.north -| Fr.west) + (5em,-15em)$) [name=elementLegend, Element, fill=white]
    {\NaturalElementTextFormat{Z}{mass}{Symbol}{Name}};
  \node[Element, fill=white, right of=elementLegend, xshift=1em]
    {\NaturalElementTextFormat{}{}{man-made}{}} ;

%% Diagram Title
  \node at (H.west -| Fe.north) [name=diagramTitle, TitleLabel]
    {(Mendeleev's) Periodic Table of Chemical Elements via Ti\emph{k}Z};
\fi

\end{tikzpicture}
\vskip 1em
\caption{Periodic table illustrating the Madelung rules.}
\label{figure_2}
\end{figure}

\newpage

\vbox{\vskip 3em}

\begin{table}[h!]
\renewcommand{\arraystretch}{1.2}
\begin{tabular}{|c|c|c|c|c|l|}
\hline
{\bf atomic numbers} & {\bf subshell} & {\boldmath $n + \ell$} & {\boldmath $n$} & {\boldmath $\ell$} & {\boldmath $E_{n,\ell}$}  \\
\hline
\color{darkorange} \bf 1--2  & \color{darkorange} \bf  1s & \color{darkorange}\bf 1 & \color{darkorange} \bf 1 & \color{darkorange} \bf 0 & \color{darkorange} \boldmath $4$ 
\\  
\hline
\color{darkorange} \bf 3--4  & \color{darkorange} \bf 2s & \color{darkorange} \bf 2 & \color{darkorange} \bf 2 & \color{darkorange} \bf 0 & \color{darkorange} \boldmath $6$ 
\\
\hline
\color{darkgreen} \bf 5--10   & \color{darkgreen} \bf 2p & \color{darkgreen} \bf 3 & \color{darkgreen} \bf 2 & \color{darkgreen} \bf 1 & \color{darkgreen} \boldmath $7\frac{1}{3}$ 
\\
\color{darkorange} \bf 11--12 & \color{darkorange} \bf 3s  & \color{darkorange} \bf 3 & \color{darkorange} \bf 3 & \color{darkorange} \bf 0 & \color{darkorange} \boldmath $8$  
\\
\hline
\color{darkgreen} \bf 13--18 & \color{darkgreen} \bf 3p & \color{darkgreen} \bf 4 & \color{darkgreen} \bf 3 & \color{darkgreen} \bf 1 & \color{darkgreen} \boldmath $9\frac{1}{3}$  
\\
\color{darkorange} \bf 19--20 & \color{darkorange} \bf 4s & \color{darkorange} \bf 4 & \color{darkorange} \bf 4 & \color{darkorange} \bf 0 & \color{darkorange} \boldmath $10$ \\
\hline
\color{darkblue} \bf 21--30 & \color{darkblue} \bf 3d & \color{darkblue} \bf 5 & \color{darkblue} \bf 3 & \color{darkblue} \bf2 & \color{darkblue} \boldmath $11\frac{1}{5}$ \\
\color{darkgreen} \bf  31--36 & \color{darkgreen} \bf  4p & \color{darkgreen} \bf  5 & \color{darkgreen} \bf  4 & \color{darkgreen} \bf 1 & \color{darkgreen} \boldmath $11\frac{1}{3}$ \\
\color{darkorange} \bf 37--38 & \color{darkorange} \bf 5s & \color{darkorange} \bf 5 &  \color{darkorange} \bf 5 & \color{darkorange} \bf 0 & \color{darkorange} \boldmath $12$ \\
\hline
\color{darkblue} \bf 39--48  & \color{darkblue} \bf 4d & \color{darkblue} \bf 6 & \color{darkblue} \bf 4 & \color{darkblue} \bf 2 & \color{darkblue} \boldmath $13\frac{1}{5}$ \\
\color{darkgreen} \bf  49--54  & \color{darkgreen} \bf 5p & \color{darkgreen} \bf 6 & \color{darkgreen} \bf  5 & \color{darkgreen} \bf  1 & \color{darkgreen} \boldmath $13\frac{1}{3}$ \\
\color{darkorange} \bf 55--56  & \color{darkorange} \bf 6s & \color{darkorange} \bf 6 & \color{darkorange} \bf 6 & \color{darkorange} \bf 0 & \color{darkorange} \boldmath $14$ \\
\hline
\color{darkpink} \bf 57--70 & \color{darkpink} \bf 4f  & \color{darkpink} \bf7 & \color{darkpink} \bf 4 & \color{darkpink} \bf 3 & \color{darkpink} \boldmath $15\frac{1}{7}$ \\
\color{darkblue} \bf 71--80 & \color{darkblue} \bf 5d & \color{darkblue} \bf 7 & \color{darkblue} \bf 5 & \color{darkblue} \bf 2 & \color{darkblue} \boldmath $15\frac{1}{5}$ \\
\color{darkgreen} \bf  81--86 & \color{darkgreen} \bf  6p & \color{darkgreen} \bf 7 & \color{darkgreen} \bf  6 & \color{darkgreen} \bf  1 & \color{darkgreen} \boldmath $15\frac{1}{3}$ \\
\color{darkorange} \bf 87--88 & \color{darkorange} \bf 7s & \color{darkorange} \bf 7 & \color{darkorange} \bf 7 & \color{darkorange} \bf 0 & \color{darkorange} \boldmath $16$ \\
\hline
\color{darkpink} \bf 89--102  & \color{darkpink} \bf 5f & \color{darkpink} \bf 8 & \color{darkpink} \bf 5 & \color{darkpink} \bf 3 & \color{darkpink} \boldmath $17\frac{1}{7}$ \\
\color{darkblue} \bf 103--112 & \color{darkblue} \bf 6d & \color{darkblue} \bf 8 & \color{darkblue} \bf 6 & \color{darkblue} \bf 2 & \color{darkblue} \boldmath $17\frac{1}{5}$ \\
\color{darkgreen} \bf 113--118 & \color{darkgreen} \bf  7p & \color{darkgreen} \bf  8 & \color{darkgreen} \bf 7 & \color{darkgreen} \bf 1 & \color{darkgreen} \boldmath $17\frac{1}{3}$ \\
\color{darkorange} \bf 119--120 & \color{darkorange} \bf 8s & \color{darkorange} \bf 8 & \color{darkorange} \bf 8 & \color{darkorange} \bf 0 & \color{darkorange} \boldmath $18$ \\
\hline
\end{tabular}

\vskip 1em \noindent
\caption{The Madelung rules for elements 1 to 120, and the energies $E_{n,\ell}$.}
\label{table_1}
\end{table}
\vskip 1em

\iffalse
\break
{\color{darkorange} \bf \boldmath $s$ subshells, with $\ell = 0$, are in orange.}
\break
  {\color{darkgreen} \bf \boldmath $p$ subshells, with $\ell = 1$, are in green.}
\break
  {\color{darkblue} \bf \boldmath $d$ subshells, with $\ell = 2$, are in blue.}
\break
 {\color{darkpink} \bf \boldmath $f$ subshells, with $\ell = 3$, are in pink.}
\fi

\newpage

\appendix

\section{Proofs}
\label{sec:unitary_equivalence}

Here we give the proofs of Theorems \ref{thm:unitary_equivalence_1} and \ref{thm:unitary_equivalence_2}.

\vskip 1em

\noindent \textbf{Theorem \ref{thm:unitary_equivalence_1}.} \textit{The operator $F \maps L^2(S^3) \otimes \C^2 \to \H$ is a unitary equivalence between the representation $\rho$ of $\SU(2) \times \SU(2)$ on $L^2(S^3) \otimes \C^2$ and the representation $\rho_\H$ of this group on $\H$, and $F \d = \d F$ on the domain of $\d$.}

\begin{proof}  
First we prove that $\d$ commutes with $F$ on the domain of $\d$.  For this, note that 
\[   F =  I(p_+ + C p_-) \]
where 
\[   p_+, p_- \maps  L^2(S^3) \otimes \C^2 \to L^2(S^3) \otimes \C^2 \]
are the self-adjoint projections to the $+1$ and $-1$ eigenspaces of $S = \d/|\d|$, and $I$ is the identity operator regarded as a map from $L^2(S^3) \otimes \C^2$ to $\H$, which is the same space with a different complex structure.  Clearly $\d$ commutes with $I$ and with $p_+$ and $p_-$.   To show that $\d$  commutes with $F$, it thus suffices to show that $\d$ commutes with $C$.

For this, recall that 
\[      
C \maps L^2(S^3) \otimes \C^2 \to   L^2(S^3) \otimes \C^2  
\]
is defined by
\[       (C \psi)(g) = \epsilon \, \overline{\psi}(g)   \]
where $\overline{\psi}$ is the componentwise complex conjugate of $\psi \maps S^3 \to \C^2$ and  
\[      \epsilon = \left(\begin{array}{cc} 0 & 1 \\ -1 & 0 \end{array} \right) .\]
Since $\d = i X_j \otimes \sigma_j + \tfrac{3}{2}$ by Equation \eqref{eq:dirac_operator}, to show $\d$ commutes with $C$ it  is enough to show that $iX_j \otimes \sigma_j$ commutes with $C$.   For brevity we write $X_j$ for $X_j \otimes 1$ and $\sigma_j$ for $1 \otimes \sigma_j$; these operators commute so we can write $iX_j \otimes \sigma_j$ as $i\sigma_j X_j$.     Thus, to show $\d$ commutes with $C$ it suffices to show
\begin{equation}
\label{eq:to_be_shown}
  \epsilon \, \overline{i \sigma_j X_j \psi} = i \sigma_j X_j \epsilon \overline{\psi}  
\end{equation}
for all square-integrable $\psi \maps S^3 \to \C^2$, where the derivative is in the distributional sense.
 
It is easy to check that
\[     - \epsilon \, \overline{\sigma}_j =  \sigma_j \, \epsilon \]
for $j = 1, 2, 3$.   This implies Equation \eqref{eq:C_commutation} and it also implies Equation \eqref{eq:to_be_shown}, as desired:
\[  \begin{array}{ccc}  
\epsilon \, \overline{i \sigma_j X_j \psi} 
&=& i \sigma_j \,\epsilon \, \overline{X_j \psi} \\ [2pt]
&=& i \sigma_j \, \epsilon X_j \overline{\psi} \\ [2pt]
&=& i \sigma_j  X_j \, \epsilon \,\overline{\psi} .
\end{array}
\]

Next we prove that $F$ is complex linear.  For this we show that $j F =  F i$
as operators from $L^2(S^3) \otimes \C^2$ to $\H$, or, using the definition of $F$,
\[    j I(p_+ + C p_-) =  I(i  p_+ + C i p_-) . \]
%We have seen that $C$ commutes with $\d$.   Thus $C$ preserves the $+1$ and $-1$ eigenspaces of $S$, so 
%\[           C p_{\pm} = p_{\pm} C.\]
Since $j$ equals $i$ on the $+1$ eigenspace of $S$ and $-i$ on the $-1$ eigenspace, we have
\[            j I p_{\pm} = \pm I i p_{\pm} .\]
Using this fact and the conjugate-linearity of $C$, we obtain
\[
\begin{array}{ccl}
 j I(p_+ + C p_-)
 &=&  I( i p_+ - i C p_-)  \\
&=& I(i  p_+ + C i p_-) 
\end{array} 
\]
as desired.   

Next we prove that $F$ is unitary.    Since $C$ commutes with $\d$ and thus $S = \d/|\d|$, it follows that $C$ and therefore $F$ preserve the $+1$ and $-1$ eigenspaces of $S$.  To show $F$ is unitary, it thus suffices to show it is unitary as an operator on each of these subspaces.   Since $F$ is the identity on the $+1$ eigenspace of $S$, we just need to show it is unitary when restricted to the $-1$ eigenspace of $S$, where it equals $I C$.   For this, first note that $I C$ preserves the norm on the $-1$ eigenspace of $S$: given a vector $\psi_-$ in this subspace we have
\[     \| I C \psi_- \|^2 
= \int_{S^3} \left\langle \epsilon \overline{\psi_-}(g) , \epsilon \overline{\psi_-}(g) \right\rangle \, dg 
=  \int_{S^3} \left\langle \psi_-(g) , \psi_-(g) \right\rangle \, dg 
= \| \psi_- \|^2 .\]
Second, note that $I C$ is invertible: indeed, $C$ is its own inverse and $I$ is invertible since it is the identity on the underlying sets of the vector spaces it goes between.   

Finally we prove that $F$ is an equivalence of $\SU(2) \times \SU(2)$ representations.  For this it suffices to show
\[        F \rho(g_1, g_2) \psi = \rho_\H(g_1,g_2) F \psi \]
for all $\psi \in  L^2(S^3) \otimes \C^2$.  Indeed, it suffices to check this equation for $\psi_+ = p_+ \psi$ and $\psi_- = p_- \psi$ separately.   First, recall from Section \ref{sec:einstein} that the operators $\rho(g_1,g_2)$ commute with $\d$, so they preserve the eigenspaces of $S = \d/|\d|$, and thus
\[           \rho(g_1, g_2) \psi_\pm = (\rho(g_1,g_2) \psi)_\pm. \]
Second, note that the definition of $\rho_\H$ says
\[             I \rho(g_1, g_2) = \rho_\H(g_1,g_2) I .\]
For $\psi_+$ we thus have
\[  \begin{array}{ccl}
      F \rho(g_1,g_2)(\psi_+) &=& F ((\rho(g_1,g_2) \psi)_+) \\ [2pt]
      &=& I((\rho(g_1,g_2) \psi)_+) \\ [2pt]
     &=& \rho_\H(g_1,g_2) (I(\psi_+))  \\ [2pt]
     &=& \rho_\H(g_1,g_2) F (\psi_+) .
\end{array}
\]
For $\psi_-$ the calculation is a bit longer:
\[  \begin{array}{ccl}
      F \rho(g_1,g_2)(\psi_-) &=& F ((\rho(g_1,g_2) \psi)_-) \\ [2pt]
      &=& I C((\rho(g_1,g_2) \psi)_-) \\ [2pt]
      &=& I C((\rho(g_1,g_2) (\psi_-)) \\ [2pt]
      &=& I (\rho(g_1,g_2) (C(\psi_-)) \\ [2pt]
      &=& (\rho(g_1,g_2) (I C (\psi_-)) \\ [2pt]
      &=& \rho_\H(g_1,g_2) (F(\psi_-)).
\end{array}
\]
Here the fourth equation uses the fact that $C$ commutes with the operators $\rho(g_1,g_2)$.  
To see this, note that for any $\psi \in L^2(S^3) \otimes \C^2$, 
\[  \begin{array}{ccl}
 C(\rho(g_1,g_2) \psi)(g) &=& \epsilon \, \overline{g_2 \psi(g_1^{-1} g g_2)}
 \\  [3pt]
&=& g_2 \epsilon \, \overline{\psi(g_1^{-1} g g_2)}  \\ [3pt]
&=& (\rho(g_1,g_2) (C\psi))(g)
\end{array}
\] 
where in the second step we use Equation \eqref{eq:C_commutation}. 
\end{proof}

\vskip 1em

\noindent{\textbf{Theorem \ref{thm:unitary_equivalence_2}.} \textit{The map $\Fock(F) \maps \Fock \h \to \Fock \H$ is a unitary equivalence between the representation $\Fock(\rho)$ of the group $\SU(2) \times \SU(2)$ on the fermionic Fock space  $\Fock \h$ for bound states of the hydrogen atom Hamiltonian and the representation $\Fock(\rho_\H)$ of this group on the fermionic Fock space $\Fock \H$ for left-handed massless spin-$\hf$ particles on the Einstein universe.   That is,
\[      \Fock(F) \Fock(\rho(g_1,g_2)) = \Fock(\rho_\H(g_1,g_2)) \Fock(F)  \]
for all $(g_1,g_2) \in \SU(2) \times \SU(2)$.   Moreover 
\[   \Fock(F) d\Fock(\d) = d\Fock(\d) \Fock(F)  \]
on the domain of $d\Fock(\d)$.}

\begin{proof}
If $V \maps \H_1 \to \H_2$ and $U \maps \H_2 \to \H_3$ are any unitary operators between Hilbert spaces, we have
\[   \begin{array}{ccl}
\Fock(UV) (\phi_1 \wedge \cdots \wedge \phi_n) &=& UV\phi_1 \wedge \cdots \wedge UV\phi_n \\ [3pt]
&=& \Fock(U)(V \phi_1 \wedge \cdots \wedge V \phi_n) \\ [3pt]
&=& \Fock(U) \Fock(V) ( \phi_1 \wedge \cdots \wedge \phi_n)  
\end{array}
\]
for all $\phi_1, \dots \phi_n \in \H_1$.  Since linear combinations of wedge products $\phi_1 \wedge \cdots \wedge \phi_n$ are dense in $\Fock(\H_1)$, and all the operators involved are continuous, we
conclude that $\Fock(UV) = \Fock(U) \Fock(V)$.  Similarly $\Fock(1) = 1$, and thus $\Fock(U)^{-1} = \Fock(U^{-1}) $.   

Using this together with Theorem \ref{thm:unitary_equivalence_1}, it follows that
\[  \begin{array}{ccl}
 \Fock(F) \Fock(\rho(g_1,g_2)) \Fock(F)^{-1} &=& \Fock( F \rho(g_1,g_2) F^{-1}) \\ [3pt]
 &=& \Fock(\rho_\H(g_1,g)) 
 \end{array}
\]
and thus 
\[      \Fock(F) \Fock(\rho(g_1,g_2)) = \Fock(\rho_\H(g_1,g_2)) \Fock(F)  \]
for all $g \in \SU(2) \times \SU(2)$, as was to be shown.

By Theorem \ref{thm:unitary_equivalence_1} we also have $\d = F \d F^{-1}$, where we must beware that $\d$ on the left is a self-adjoint operator on $\h$ while $\d$ on the right is a self-adjoint operator on $\H$.   Since $F$ is unitary, this implies that for any $t \in \R$ we have
\[     \exp(it\d) = F \exp(i t \d) F^{-1}  \]
and thus
\[    
\begin{array}{ccl}
 \Fock(\exp(i t \d)) &=& \Fock(F \exp(i t \d) F^{-1}) \\ [3pt]
 &=&\Fock(F) \Fock(\exp(i t \d)) \Fock(F)^{-1}  \\ [3pt]
 &=& \Fock(F) \exp(i t d\Fock(\d)) \Fock(F)^{-1}
 \end{array}
 \]
where in the last step we use Equation \eqref{eq:fock_exponentiation}.   
Differentiating with respect to $t$ and setting $t = 0$, we get
\[    d\Fock(\d) =  \Fock(F) d\Fock(\d) \Fock(F)^{-1}     \]
on the domain of $d\Fock(\d)$, and thus
\[    \Fock(F) d \Fock(\d) = d\Fock(\d) \Fock(F) \]
as was to be shown.
\end{proof}

\newpage

\end{document}